\documentclass[journal]{IEEEtran}
\usepackage[dvipsnames]{xcolor}
\colorlet{LightRubineRed}{RubineRed!70!}
\colorlet{Mycolor1}{green!10!orange!90!}
\definecolor{Mycolor2}{HTML}{00F9DE}
\usepackage{color} 
\usepackage[normalem]{ulem}
\usepackage{amsmath, amsthm, amssymb}
\usepackage{epsfig}
\usepackage{latexsym}
\usepackage{graphicx} 
\usepackage{relsize}
\usepackage{cite}
\usepackage{subfigure}
\usepackage{bm}

\newtheorem{theorem}{Theorem}

\newcommand{\teeabc}{EE}
\newcommand{\rabc}{R}
\newcommand{\rb}{R_b}
\newcommand{\rh}{R_h}
\newcommand{\eabc}{E}
\newcommand{\eb}{E_b}
\newcommand{\eh}{E_h}
\newcommand{\eeabc}{EE}
\newcommand{\Bblb}{B_{\scriptscriptstyle b, LB}}
\newcommand{\Bbub}{B_{\scriptscriptstyle b, UB}}

\ifCLASSINFOpdf
\else
\fi

\hyphenation{op-tical net-works semi-conduc-tor}

\begin{document}
\title{Opportunistic Ambient Backscatter Communication in RF-Powered Cognitive Radio Networks}
\author{Rajalekshmi Kishore,~\IEEEmembership{Student Member,~IEEE}, Sanjeev~Gurugopinath,~\IEEEmembership{Member, IEEE},\\Paschalis C.~Sofotasios,~\IEEEmembership{Senior Member,~IEEE}, Sami Muhaidat, \IEEEmembership{Senior Member,~IEEE},\\and Naofal Al-Dhahir,~\IEEEmembership{Fellow,~IEEE}
	\thanks{This work will appear in part in \cite{Kishore_WCNC_2019}.}
	\thanks{R. Kishore is with the Department of Electrical and Electronics Engineering, BITS Pilani, K.~K.~Birla Goa Campus, Goa 403726, India, (email: {\rm lekshminair2k@yahoo.com}).}

	\thanks{S. Gurugopinath is with the Department of Electronics and Communication Engineering, PES University, Bengaluru 560085, India, (email: {\rm sanjeevg@pes.edu}).}

	\thanks{P. C. Sofotasios is with the Department of Electrical and Computer Engineering, Khalifa University of Science and Technology, PO Box 127788, Abu Dhabi, UAE, and with the Department of Electronics and Communications Engineering, Tampere University of Technology, 33101 Tampere, Finland (email: {\rm p.sofotasios@ieee.org}).}

	\thanks{S.  Muhaidat is with the Department of Electrical and Computer Engineering, Khalifa University of Science and Technology, PO Box 127788, Abu Dhabi, UAE and with the Institute for Communication Systems, University of Surrey, GU2 7XH, Guildford, UK, (email: {\rm muhaidat@ieee.org}).}

	\thanks{N. Al-Dhahir is with the Department of Electrical Engineering, University of Texas at Dallas, TX 75080 Dallas, USA (e-mail: {\rm aldhahir@utdallas.edu}).}}


\maketitle

\begin{abstract}
In the present contribution, we propose a novel opportunistic ambient backscatter communication (ABC) framework for radio frequency (RF)-powered cognitive radio (CR) networks. This framework considers  opportunistic spectrum sensing integrated with ABC and harvest-then-transmit (HTT) operation strategies. 
Novel analytic expressions are derived  for the average throughput, the average energy consumption and the energy efficiency in the considered set up. 
These expressions are represented in closed-form and have a tractable algebraic representation which renders them convenient to handle  both analytically and numerically. 
In addition,  we   formulate an optimization problem to maximize the energy efficiency of the CR system operating in mixed ABC $-$ and HTT $-$ modes, for a given set of constraints including   primary interference and   imperfect spectrum sensing constraints. 
Capitalizing on this, we determine the  optimal set of parameters which in turn  comprise the optimal detection threshold, the optimal degree of trade-off between the CR system operating in the ABC $-$ and HTT $-$ modes and the optimal data transmission time. Extensive results from respective computer simulations  are also presented for corroborating  the corresponding analytic results and to demonstrate the performance gain of the proposed model in terms of energy efficiency.
\end{abstract}

\begin{IEEEkeywords}
Ambient backscatter communication, cognitive radio networks, energy detection, energy efficiency, wireless power transfer.
\end{IEEEkeywords}

\IEEEpeerreviewmaketitle

\section{Introduction}
The need for efficient utilization of spectrum resources has become a fundamental requirement in modern wireless networks due to the witnessed  spectrum scarcity and the ever-increasing demand for higher data rate applications and internet services. 
In this context, an interesting proposal has been the development of cognitive radio (CR) networks \cite{Mitola_IEEE_1999}, which  can adapt their transmission parameters according to the characteristics of the communication environment. 
Cognitive radios have been shown to be efficient in increasing spectrum utilization due to their  inherent spectrum sensing (SS) capability \cite{Axell_SP_2012}. In  this regard, dynamic spectrum access (DSA), where the secondary users (SU) can opportunistically access the underutilized frequency bands, is the standard  solution for the realization of DSA \cite{Altrad_IEEEVT_2014}, which is envisioned to be an integral part of  future  communication systems \cite{Hasan_IEEaccess_2016}. \textcolor{black}{In order to realize DSA, three strategies have been proposed, namely the underlay, the overlay and the interweave. In the underlay technique, the SUs coexist with a PU provided that the interference level at the PU remains below a certain threshold \cite{Usman_IEEESJ_2014}. In the overlay paradigm, the SUs would be allowed to share the band with PU by exploiting the knowledge of its message and codebook in order to reduce interference. Finally, in the interweave technique,  the SU can only access the licensed spectrum of the PU when it is idle \cite{Goldsmith_IEEE_2009}.}

Recently, energy efficiency (EE) has emerged as a major  design and performance criterion for the current and forthcoming wireless systems  \cite{Costa_IEEEJSAC_2016, Zhang_IEEEJSAC_2016, RobatMili_IEEETC_2016, Kishore_Adhoc_2017, Hu_IEEETC_2016}, mainly driven by the ever increasing operating expenditure of  communication networks. 
In this context, it has been shown that combining  effective energy harvesting (EH) techniques with CR can simultaneously improve  the spectrum efficiency and the energy efficiency \cite{Miridakis_IEEEJSAC_2016}, \cite{Shafie_IEEEWC_2015}. Furthermore,  powering  mobile devices by harvested energy  from ambient sources and/or external transmission activities enables   wireless networks to achieve an increased degree of  self-sustainability for a longer period of time \cite{Ren_IEEEcommMag_2018}. 
More recently, the integration of RF energy harvesting techniques with CR networks has lead to the development of a new communication paradigm, known as \emph{RF-powered CR networks} \cite{Kim_IEEETWC_2017}.  In such networks, a CR transmitter harvests RF energy when a primary user is present and utilizes it for   data transmission when the spectrum is vacant. This protocol is referred to as  \emph{harvest-then-transmit} (HTT)   \cite{Kim_IEEETWC_2017, Hoang_IEEE_2017}. 

However, a major challenge associated with this method  is the  reduction of the throughput of the secondary network when the harvested energy is low and/or when the data transmission time is shorter. 
To overcome this shortcoming, the concept of simultaneous wireless information and power transfer (SWIPT) was introduced \cite{Varshney_IEEEConf_2008}, which has attracted significant research attention \cite{Lu_IEEEComsurvey_2015}, \cite{Perera_IEEECST_2018}.
 In addition, in 5G communication systems and beyond, as well is in the context of the Internet of Things (IoT) applications, the SWIPT technology can be  fundamentally important for energy and information transmission across different wireless systems and network architectures, including CR-based networks \cite{Yan_IEEETVT_2017, Benkhelifa_IEEETGC_2017, habob_IEEE_2017}.

\subsection{Ambient Backscatter Communication (ABC)}
	
Ambient backscatter communication (ABC) has recently emerged as a new communication paradigm that is characterized by low power and low cost requirements,  which renders it a strong candidate  for several  IoT based applications \cite{Ensworth_IEEE_2017,Shao_IEEE_2017}. \textcolor{black}{In an ABC system, there are typically two main components, namely, an RF source which acts as a carrier emitter and a backscatter receiver. The ambient RF sources, e.g., TV towers, cellular base stations, and WiFi APs act as carrier emitters. Therefore, the deployment of dedicated RF sources is not required – as opposed to the case of conventional backscattering communication systems. As a result, this reduces the power consumption and overall cost. Secondly, by utilizing existing ambient RF signals, there is no need to allocate new frequency spectrum for ABC, and hence the spectrum utility is improved \cite{Huynh_IEEEST_2018}. In the context of CR, the secondary transmitter (ST) can communicate with a secondary	receiver (SR) by backscattering the primary user (PU) signal, whenever the PU is active. In other words, instead of initiating a CR transmission only when the PU is inactive, the ST can backscatter the PU signal to SR, even when the PU is active. For example, ST can employ the ON-OFF keying strategy to indicate bit \emph{1} or bit \emph{0} by switching its antenna between reflecting or non-reflecting states, respectively.}

Based on the above, it is evident that  the performance of ABC-based CR networks  depends considerably on the availability of PU signals, which represents a major challenge for CR networks particularly during the long idle period. Therefore, this requires a paradigm shift towards the development of key enabling techniques for next generation CR networks, such as the hybrid ABC-HTT schemes, which were recently proposed in  \cite{Hoang_IEEE_2017}. 
However, a common and major drawback in the proposed models  is the assumption of perfect knowledge of PU activities, which is   unrealistic in practical  CR based communication scenarios.  
To this effect,  we propose a novel opportunistic hybrid ABC-HTT model for CR networks, coined as \emph{ABC-HTT-based CR networks}. The proposed framework exploits the potentials of both ABC and RF-powered CRNs; hence, in the context of the proposed framework, we further evaluate and quantify the performance of CR networks by taking into account  the incurred sensing errors under different realistic communication scenarios.

\subsection{Related Work and Motivation}

\subsubsection{RF Powered Cognitive Radio Networks}
RF-based energy harvesting for CR networks is an energy efficient approach to harvest  energy from PU activity  in order to maximize the network capacity 
\cite{Ren_IEEECommMaz_2018}, \cite{Ahmed_IEEETMC_2018}. In \cite{Benkhelifa_IEEETCCN_2017}, the authors investigated SWIPT for spectrum sharing in CR networks, where a CR receiver harvests energy from primary and secondary transmissions using antenna switching. In this work, antennas were selected based on two schemes, namely, the prioritizing data selection (PDS) scheme and  the prioritizing energy selection (PES) scheme.
Then,  a solution was proposed for  the optimal energy-data trade-off study for both PDS and PES schemes under different fading conditions. In \cite{Wang_IEEEaccess_2017}, Wang et al.~introduced a channel access strategy to maximize the sum throughput of secondary users by jointly optimizing the energy harvesting time, resource allocation, and transmit power. Closed-form expressions for the optimal transmit power and channel allocation were also derived, whilst it was shown   that there exists a tradeoff between the sum throughput of CR network and harvested energy.

\subsubsection{Ambient Backscatter Communication}

Recently, Hoang et al.~\cite{Hoang_IEEE_2017} demonstrated  that incorporating ABC with RF-powered CR networks  improves significantly the secondary network throughput. This is because when the primary transmitter is active, the CR transmitter can utilize ABC to transmit its own data to the intended CR receiver. Also, the authors  explored a tradeoff between CR transmission in the ABC and HTT modes,  and provided insights on the optimal time duration in these two modes. 
In \cite{Kim_IEEETWC_2017}, a hybrid backscatter communication for a wireless powered heterogeneous network was introduced, where the  HTT protocol may not be optimal due to the strict energy constraint for active RF communication. 
In addition to HTT, as the primary access protocol, long-range bi-static scatter and short-range ambient backscatter were   adopted in order to increase the  transmission range and provide uniform rate distribution. Likewise, the authors in \cite{Clerckx_IEEEComletter_2017}, \cite{Lu_IEEEWC_2018} focused on the tradeoff between energy harvesting, active transmission and ambient backscatter communication and  demonstrated the superiority of the hybrid scheme in terms of achieved  throughput.
Also, the  effect of physical parameters on the capacity of both legacy and backscatter channels were analyzed  in \cite{Darsena_IEEETC_2017} by considering different receiver architectures. 
Assuming  practical operating conditions, it was shown that  a legacy system employing an orthogonal frequency division multiplexing (OFDM) modulation can turn the RF interference arising from the backscatter process into a form of multipath diversity that can be exploited for enhancing  its performance.
	 
Nevertheless, despite the usefulness of relevant existing contributions, the focus has been largely on the performance enhancement in terms of achievable throughput, and not in terms of energy efficiency and/or minimization of energy consumption, which is a vital metric in the context of the considered system.  
\textcolor{black}{It is recalled here that energy efficiency (EE) is an important performance evaluation metric for a CRN. It is defined as the ratio of the average achievable throughput to the average energy consumption, measured in bits/Hz/J \cite{Althu_dessertation_2014}. It can be noted that the detection accuracy in spectrum sensing affects both the average network throughput and the average energy consumption. However, there exists a tradeoff in optimizing the two metrics, since an increase (or decrease) in average achievable throughput results in an increase (or decrease, respectively) in the average energy consumption. The energy efficiency combines both these metrics, and hence it is capable of accounting  more effectively for the overall performance of a CR system, as a function of the detection accuracy.}

\subsection{Contributions}

Motivated by the above, in the present study  we analyze the energy efficiency (EE) performance of an ABC-HTT-based CR network  in the presence of sensing errors and without assuming knowledge of the PU activity. 
For simplicity, we consider energy detection-based spectrum sensing, which has widely known advantageous characteristics.  
In this context,  we derive analytic expressions for the average achievable throughput and average energy consumption followed by a detailed formulation of  an optimization problem that maximizes the energy efficiency subjected to several constraints,  including the interference constraint on PU. Based on this, we then derive the expressions for the optimal detection threshold,  optimal harvesting time and  optimal data transmission time, and quantify the tradeoff between the ABC and HTT modes, all in terms of energy efficiency.

Specifically,  the main contributions of the present work are listed below:

\begin{itemize}
\item We propose a novel opportunistic ABC framework for RF-powered CR networks in the presence of sensing errors, which operates in combination with the existing HTT mode. We call the proposed network model as \emph{ABC-HTT-based CR network}.
\item We derive novel analytic  expressions for the average achievable throughput, the average energy consumption and the energy efficiency of the proposed ABC-HTT-based CR network.
\item We formulate an optimization problem that maximizes the energy efficiency of the considered network, and evaluate the optimal detection threshold, the optimal energy harvesting time and the optimal data transmission time, subject to PU interference and energy harvesting constraints. 
Also, we quantify the requirements on the backscattering data rate and transmit power of the CR network as well as their impact on finding the optimal energy harvesting time and data transmission time.
\item We present detailed numerical results, which validate our analysis  and evaluate the energy efficiency performance of the CR network in the proposed realistic platform which includes the sensing errors. 
Furthermore, we quantify  the trade-off between   ABC and HTT modes in terms of energy efficiency. It is shown  that operating the CR network in a combination of these two modes improves the overall energy efficiency. In addition, the impact of sensing errors on the overall system performance is  addressed.
\end{itemize}

To the best of the authors' knowledge, no analysis on the energy efficiency in the presence of sensing errors  in the context of ABC-HTT-based CR networks has been reported in the open technical literature.

\subsection{Organization}

The remainder of this paper is organized as follows: Sec.~\ref{SecNetModel} describes the proposed network model and summarizes the performance of energy-based spectrum sensing. The energy efficiency expression is derived in Sec.~\ref{SecEEAnalysis} and the optimization problem is formulated in Sec.~\ref{SecProbForm}. 
The corresponding analysis and related  insights are provided in Sec.~\ref{SecAnalysis}, followed by the corresponding validation through   comparisons with  numerical results in Sec.~\ref{SecNumRes}. Finally, conclusions are drawn in Sec.~\ref{SecConc}.

\section{Network Model} \label{SecNetModel}

Consider an ABC-HTT-based cognitive radio network as shown in Fig.~\ref{FigSysModel}, which consists of a secondary user transceiver pair, denoted by (ST, SR), and a primary transceiver pair, denoted by (PT, PR). We model the CR network in the opportunistic spectrum access (OSA) paradigm, in which the PU channels are accessed opportunistically using SS to detect spectrum holes. The ST is equipped with an energy-based SS unit, an RF energy harvesting unit and an ABC unit. \textcolor{black}{Also, we consider a typical coarse sensing framework, where SS is carried out followed by data transmission over a time frame of $T_{fr}$ seconds, which is normalized such that $T_{fr}=1$.} The time  diagram for the proposed model is shown in Fig.~\ref{FigSysModel}; based on this, when  the PT is declared   present, the ST can harvest energy and store it in a battery, or perform ABC for data transmission, as shown in Fig.~\ref{FigSysModel}(a). 
In this case, the network is   in the \emph{ambient backscatter communication} (ABC) mode, where  $\tau$ denotes the normalized data transmission period, and $(1-\tau)$ denotes the normalized sensing duration of the secondary user transceiver pair. Furthermore, we let $\alpha \tau$ represent  the time fraction utilized for energy harvesting and $(1-\alpha) \tau$ represent  the time fraction for ABC, when the PT-PR channel is declared   occupied. 
The harvested energy during the time $\alpha \tau$ will be stored in the ST battery, and is used for   data transmission over the ST-SR link, when the PT-PR channel is idle.
 On the contrary, when the PT is declared   absent, the ST uses the harvested energy to transmit data to SR during the data transmission period. In this case, the network is said to be in the \emph{harvest-then-transmit} (HTT) mode, which is shown in Fig.~\ref{FigSysModel}(b). In this case, $\mu \in (0,1)$ denotes the fraction of $\tau$ which is used for data transmission, the choice of which depends on the amount of harvested energy.

\begin{figure*}[t]
	\centering
	\includegraphics [width=6.5in, height=2.5in]{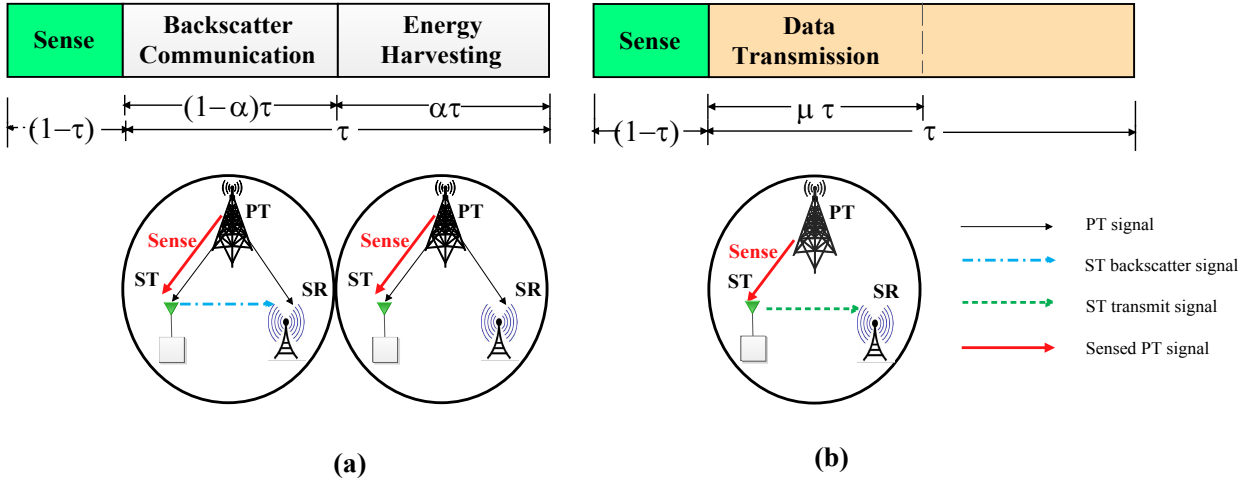}
	\caption{Time slot structure when: (a) the PU is declared to be present; (b)  when the PU is declared to be absent.} \label{FigSysModel}
\end{figure*} 

It is recalled that in the present set up,  we consider energy detection-based SS, given its numerous advantages such as  simple realization and moderate computational complexity \cite{Yucek_IEEE_2009}. 
Based on this, the  probabilities of false-alarm and signal detection at the ST are  given by \cite{Liang_IEEE_2008} 
\begin{align}\label{PfEqn} 
 P_{f}  = Q \left[ \left( \frac {\varepsilon} {\sigma^{2}} -1 \right) \sqrt{ (1-\tau)N_{s}} \right],
\end{align}
and
\begin{align} \label{PDEqn}
 P_{d} = Q \left[ \left(\frac {\varepsilon} {\sigma^{2}}-\gamma -1 \right) \sqrt{\frac{ (1-\tau)N_{s}} {2\gamma+1}} \right],
\end{align}
respectively, where $N_s = f_{s} T_{\scriptstyle t}$ denotes the number of observations, $f_s$ is the sampling frequency, $T_{\scriptstyle t}$ is the duration of the entire frame, $\sigma^2$ is the noise variance, $\varepsilon$ is the detection threshold, and $\gamma$ denotes the received SNR at ST. Furthermore, $Q(\cdot)$ denotes  the complementary cumulative distribution function of a standard Gaussian random variable.

When the PT is declared to be occupied, the ST employs ABC to transmit its own data to the SR, such that a certain quality of service is guaranteed to  the primary system. On the contrary, when the PT is declared to be  inactive, the ST operates in the HTT mode using conventional RF transmission. It is noted here that it was recently  shown that switching between these two modes improves the overall throughput of the secondary system \cite{Hoang_IEEE_2017}. A similar idea is adopted in the present work with the difference that our analysis concerns the study of a CR network operating in ABC-HTT framework, using the OSA paradigm in the presence of the sensing errors  in terms of $P_f$ and $P_d$, as opposed to the analysis  in \cite{Hoang_IEEE_2017} which considers CR operation  in the \textcolor{black}{opportunistic communication mode.} 
In addition, we quantify the  performance  of the proposed model in terms of energy efficiency of the CR network in the presence of sensing errors,  unlike \cite{Hoang_IEEE_2017} that analyzes the throughput performance in the simplistic case of no sensing errors.

\section{Energy Efficiency and Problem Formulation} \label{SecEEAnalysis}

\subsection{Average Achievable Throughput and Energy Consumption} 
It is recalled that the  energy efficiency of the CR network is a defined as the ratio of its average achievable throughput to its average energy consumption \cite{Ejaz1,KoutM}. In what follows,  we calculate the energy efficiency of the proposed model and then  formulate an optimization problem that enables  the calculation of the  optimal values of $\varepsilon$, $\mu$, $\alpha$ and $\tau$ that maximize the energy efficiency, under PU interference and energy harvesting constraints. 
It is noted that due  to the presence of the sensing mechanism, the average achievable throughput depends on the sensing accuracy  and the communication link between PT and ST. This can be categorized into the following four scenarios, which are summarized in Table~\ref{Table1-ABC}. \textcolor{black}{Here, $P(\mathcal{H}_{0})$ and $P(\mathcal{H}_{1})$ denote the prior probabilities of the PT being inactive and active, respectively.}\\

\begin{table*}[t!]  \centering {
		\centering
		\label{TputEconTable}
		\scalebox{1}{
		} }
	\end{table*}
	
	\begin{table*}[t!]
		\centering
		\caption{Average achievable throughput and the average energy consumption for different communication scenarios in ABC-HTT-based CR network.}
		\label{Table1-ABC}
		\begin{tabular}{|c|c|c|c|l}
			\cline{1-4}
			Scenarios            & Harvested Energy       & Consumed Energy               & Throughput                                                                                                                   &  \\ \cline{1-4}
			$P(\mathcal{H}_{1})P_{d}$     & $\alpha$ $\tau$ $P_{R}$ & $P_{s}(1-\tau)$               & $R_{b}=(1-\alpha) \tau B_{b}$                                                                                                      &  \\ \cline{1-4}
			$P(\mathcal{H}_{0})P_{f}$     & 0                       & $P_{s}(1-\tau)$               & 0                                                                                                                            &  \\ \cline{1-4}
			$P(\mathcal{H}_{1})(1-P_{d})$ & 0                       & $P_{s}(1-\tau)+P_{tr}\mu \tau$ & \begin{tabular}[c]{@{}c@{}}$R_{h}=\kappa \mu \tau W \log_{2}(1+\frac{P_{tr}}{Z_{I}P_{T,PU}+P_{0}}), ~~\kappa \in (0,1)$ \end{tabular} &  \\ \cline{1-4}
			$P(\mathcal{H}_{0})(1-P_{f})$ & 0                       & $P_{s}(1-\tau)+P_{\mathrm{tr}}\mu \tau$ & $R_{h}=\mu \tau  W \log_{2}(1+\frac{P_{\mathrm{tr}}}{P_{0}})$                                                                          &  \\ \cline{1-4}
		\end{tabular}
	\end{table*}

\textbf{S1:} In this scenario, the ST correctly declares  the presence of the PT  with probability $P(\mathcal{H}_{1}) P_{d}$. Since the licensed band is occupied and the primary transmission is active, the throughput in this case is achieved due to ST using only the ABC mode, and \textcolor{black}{is given by \cite{Hoang_IEEE_2017}}
\begin{align}
R_{b, \scriptstyle S_1}=(1-\alpha) \tau B_{b},
\end{align}
where $B_{b}$ is the achievable backscatter rate in the ABC mode.

\textcolor{black}{It is worth noting that in this scenario, the SR should be able to decode the data without using power-demanding components such as analog to digital converter (ADC) and oscillators. An ultra low power receiver should be utilized to decode the modulated signal \cite{vincent_ACM_2013}. The receiver strategy proposed in \cite{vincent_ACM_2013}, namely the averaging mechanism, requires only an envelope average and threshold calculator. The envelope circuit first smoothes the average of the received signals, and then a threshold value based on two signal levels is calculated. Finally, the smoothened signal strengths are compared with this selected threshold to detect bits 1 and 0, followed by decoding.}  

\textbf{S2:} In this scenario, the ST incorrectly declares the PT to be active with probability $P(\mathcal{H}_{0}) P_{f}$. This results to a lack of throughput  since the CR network achieves no throughput by operating in the ABC mode. Furthermore, ST misses a transmission opportunity.

\textbf{S3:} In this scenario, the ST incorrectly declares  the PT to be absent, with probability $P(\mathcal{H}_{1}) (1-P_{d})$; as a consequence,  the ST misses an opportunity to use the ABC mode. Moreover, the ST transmits to SR in the HTT mode and creates  interference to PT. In the presence of the interference from PT, the CR network achieves a partial throughput of
\begin{align}
R_{h, \scriptstyle S_3}=\mu \tau \kappa W \log_{2}\left(1+\frac{P_{\mathrm{tr}}}{Z_{I}P_{T,PU}+P_0}\right),
\end{align}
\textcolor{black}{with a partial throughput factor $\kappa \in (0,1)$, which quantifies the partial throughput achievable in this scenario}, where  $W$ is the bandwidth of the primary link, $P_{0}$ is the ratio between the noise power $N_{0}$ and $g_{c}$, the channel gain coefficient between ST and SR, that is $P_{0}=\frac{N_{0}}{g_{c}}$, $P_{\mathrm{tr}}$ denotes the transmit power of the ST in the data transmission period $\mu \tau \in (0,\tau)$, as shown in Fig.~\ref{FigSysModel}(b), $Z_{I}$ denotes the ratio of the channel gain between the PT and the ST to $g_c$, and $P_{T, \rm PU}$ denotes the transmission power of the PU.

Next, $P_{\mathrm{tr}}$ can be expressed as 
\begin{equation}
P_{\mathrm{tr}} = \frac {E_{\mathrm {h}} -E_{\mathrm {s}}- E_{\mathrm {c}}}{\mu \tau },
\label{ptrequation}
\end{equation}
where $E_{\mathrm {s}}=P_{s}(1-\tau)$ is the energy consumed during sensing, $E_{\mathrm {c}}=\mu \tau P_c$ is the energy consumption of the circuitry in the transmission time $\mu \tau$, $E_{\mathrm {h}}=\alpha \tau P_{R}$ is the total harvested energy, and $P_{R}$ is the harvested RF power obtained from the PT signal at the ST, which is determined from the Friis' equation as follows \cite{Balanis_Wiely_2005} :
\begin{equation}
P_R = \delta P_{T, PU} \frac{G_T G_R \lambda^2}{(4 \pi d)^2},
\end{equation}
where $\delta \in [0,1]$ is the energy harvesting efficiency, $G_{T}$ is the PT antenna gain, $G_{R}$ is the ST antenna gain, $\lambda$ is the wavelength of the emitted wave, and $d$ is the distance between the PT and ST. Based on the above, it follows that 
\begin{align}
R_{h, \scriptstyle S_3}=\mu \tau \kappa W \log_{2}\left(1+\frac{\alpha \tau P_{R}-\mu \tau P_{c}-P_{s}(1-\tau)}{\left[Z_{I}P_{T,PU}+P_{0}\right]\tau \mu}\right). 
\end{align}

\textbf{S4:} In this scenario, the ST correctly declares  the PT to be inactive with probability $P(\mathcal{H}_{0})(1-P_{f})$, and hence the CR network achieves the maximum achievable throughput in the HTT mode, namely 
\begin{align}
R_{h, \scriptstyle S_4}=\mu \tau  W \log_{2}\left(1+\frac{P_{tr}}{P_{0}}\right).
\label{Rhs4}
\end{align}
Hence, substituting \eqref{ptrequation} into \eqref{Rhs4} yields 
\begin{align}
R_{h, \scriptstyle S_4}=\mu \tau  W \log_{2}\left(1+\frac{\alpha \tau P_{R}-\mu \tau P_{c}-P_{s}(1-\tau)}{P_{0}}\right).
\end{align}

Considering the above four scenarios, the average throughput of the ABC-HTT-based CR network is given by 
\begin{align} \nonumber
& \rabc(\tau, \alpha,\mu, \varepsilon) = P(\mathcal{H}_{1})P_{d} (1 \hspace{-0.1cm} - \hspace{-0.1cm} \alpha) \tau B_{b} \hspace{-0.1cm} + \hspace{-0.1cm}  \kappa P(\mathcal{H}_{1})(1-P_{d}) \nonumber \\
& ~~~~~~~~~~~~~~~~~ \mu \tau  W \hspace{-0.05cm} \log_{2} \hspace{-0.1cm} \left( \hspace{-0.1cm} 1 \hspace{-0.1cm} +  \hspace{-0.1cm} \frac{P_{tr}}{Z_{I}P_{T,PU}+ P_{0}} \hspace{-0.1cm}\right) \nonumber \\ 
&~~~~~~~~~~~~~~~~~~  + P(\mathcal{H}_{0})(1-P_{f}) \mu \tau  W \hspace{-0.05cm} \log_{2} \hspace{-0.1cm} \left( \hspace{-0.1cm} 1 \hspace{-0.1cm} +  \hspace{-0.1cm} \frac{P_{tr}}{P_{0}} \hspace{-0.1cm}\right),
\label{Ravg}
\end{align}
which can be equivalently expressed as 
\begin{align} 
& \rabc(\tau, \alpha,\mu, \varepsilon) = \rb(\tau, \alpha,\mu, \varepsilon) + \rh(\tau, \alpha,\mu, \varepsilon),
\end{align}
where $\rb(\tau, \alpha,\mu, \varepsilon)$ denotes the average achievable throughput of the CR network in the ABC mode, given by
\begin{align} 
\rb(\tau, \alpha,\mu, \varepsilon) \triangleq P(\mathcal{H}_{1})P_{d} (1 - \alpha) \tau B_{b},
\end{align}
and $\rh(\tau, \alpha,\mu, \varepsilon)$ denotes the average achievable throughput of the CR network in the HTT mode, given by
\begin{align} 
& \rh(\tau, \alpha,\mu, \varepsilon)\nonumber \triangleq   \kappa P(\mathcal{H}_{1})(1-P_{d}) \nonumber \\
& ~~~~~~~~~~~~~~~ \mu \tau  W \hspace{-0.05cm} \log_{2} \hspace{-0.1cm} \left( \hspace{-0.1cm} 1 \hspace{-0.1cm} +  \hspace{-0.1cm} \frac{P_{tr}}{Z_{I}P_{T,PU}+ P_{0}} \hspace{-0.1cm}\right)\nonumber \\ 
& ~~~~~~~~~~~~~~~ + P(\mathcal{H}_{0})(1-P_{f}) \mu \tau  W \hspace{-0.05cm} \log_{2} \hspace{-0.1cm} \left( \hspace{-0.1cm} 1 \hspace{-0.1cm} +  \hspace{-0.1cm} \frac{P_{tr}}{P_{0}} \hspace{-0.1cm}\right).
\end{align}

It is noted that in order for the throughput to be non-negative, the harvested energy should be greater than the consumed energy. Thus, this requirement imposes the following constraint
\begin{align}
E_{h} = \alpha \tau  P_{R}\geq E_{c} +E_{s},
\end{align}
which implies that 
\begin{align}
\alpha \geq \frac{E_{c}+E_{s}}{\tau \ P_{R}}.
\end{align}
Denoting $\alpha^{\dagger} \triangleq (E_{c}+E_{s})/(\tau \ P_{R})$ as the minimum energy harvesting time to obtain enough energy for the ST to operate in the HTT mode, we have the constraint that $\alpha \in [\alpha^{\dagger},1]$. In other words, $\rh(\tau, \alpha,\mu, \varepsilon) > 0$, only when $\alpha \in [\alpha^\dagger, 1]$; otherwise, $\rh(\tau, \alpha,\mu, \varepsilon) = 0$, that is
\begin{align}
& \rh(\tau, \alpha,\mu, \varepsilon) \nonumber \\
& ~ = \begin{cases}
 \kappa P(\mathcal{H}_{1})(1-P_{d}) \mu \tau  W \hspace{-0.05cm} \log_{2} \hspace{-0.1cm} \left( \hspace{-0.1cm} 1 \hspace{-0.1cm} +  \hspace{-0.1cm} \frac{P_{tr}}{Z_{I}P_{T,PU}+ P_{0}} \hspace{-0.1cm}\right) \\ 
~~ + P(\mathcal{H}_{0})(1-P_{f}) \mu \tau  W \hspace{-0.05cm} \log_{2} \hspace{-0.1cm} \left( \hspace{-0.1cm} 1 \hspace{-0.1cm} +  \hspace{-0.1cm} \frac{P_{tr}}{P_{0}} \hspace{-0.1cm}\right) ,  ~  \text{ if } \alpha^{\dagger}\leq \alpha \leq 1, \\ 
0, ~~~~~~~~~~~~~~~~~~~~~~~~~~~~~~~~~~~~~~~~~~~~~~  \text{otherwise.} 
\end{cases}
\end{align}
By recalling that $P_{s}$ denotes the power required by ST to perform sensing,  the average energy consumption in the CR network, from Table~\ref{Table1-ABC}, is given by 
\begin{align} 
& \eabc(\tau, \alpha,\mu, \varepsilon) = \eb(\tau, \alpha,\mu, \varepsilon)+\eh(\tau, \alpha,\mu, \varepsilon)  \\
& \phantom{\eabc(\tau, \alpha,\mu, \varepsilon)} = P_{s}(1-\tau) + \mu \tau P_{\mathrm{tr}}   \{ P(\mathcal{H}_{1})(1-P_{d}) \nonumber \\
& ~~~~~~~~~~~~~~~~~~~~~ +P(\mathcal{H}_{0})(1-P_{f}) \},
\end{align}
where $\eb(\tau, \alpha,\mu, \varepsilon)$ and $\eh(\tau, \alpha,\mu, \varepsilon)$ denote the energy consumed by the CR network, while operating in the ABC mode and HTT mode, respectively.

In the same context, the energy efficiency of the CR network, in bits/Hz/J, is defined as
\begin{align}
& \eeabc(\tau, \alpha,\mu, \varepsilon) \triangleq \frac{\rabc(\tau, \alpha,\mu, \varepsilon)}{\eabc(\tau, \alpha,\mu, \varepsilon)}, \nonumber \\
& \phantom{\eeabc(\tau, \alpha,\mu, \varepsilon)}  = EE_{b}(\tau, \alpha,\mu,\varepsilon)+EE_{h}(\tau, \alpha,\mu,\varepsilon),
\end{align}
where
\begin{align}
 EE_{b}(\tau, \alpha,\mu,\varepsilon) \triangleq \frac{\rb(\tau, \alpha,\mu, \varepsilon)}{\eabc(\tau, \alpha,\mu, \varepsilon)}
 \end{align}
 and
 \begin{align}
 EE_{h}(\tau, \alpha,\mu,\varepsilon) \triangleq \frac{\rh(\tau, \alpha,\mu, \varepsilon)}{\eabc(\tau, \alpha,\mu, \varepsilon)},
\end{align}
denote the energy efficiency values due to the ST operating in ABC and HTT modes, respectively. 
Furthermore,  the constraint $\alpha \in [\alpha^\dagger, 1]$ yields the following condition on the overall energy efficiency.
\begin{align}
& \eeabc(\tau, \alpha,\mu, \varepsilon) \nonumber \\
& ~ = \begin{cases}
\hspace{-.05cm} EE_{b}(\tau, \alpha,\mu,\varepsilon)  +EE_{h}(\tau, \alpha,\mu,\varepsilon), ~~~~ \text{ if } \alpha^{\dagger}\leq \alpha \leq 1, \\
EE_{b}(\tau, \alpha,\mu,\varepsilon), ~~~~~~~~~~~~~~~~~~~~~~~~~~  \text{otherwise.} 
\end{cases}
\end{align}

\subsection{Problem Formulation: Energy Efficiency Maximization} \label{SecProbForm}
In what follows, we describe an optimization problem in order to determine the optimal values of the parameters $\varepsilon$, $\mu$, $\alpha$ and $\tau$, such that the energy efficiency of the CR network is maximized. To this end, we formulate the following maximization problem, subject to the interference constraint on the primary network and energy harvesting constraint.
\begin{align} 
& \mathcal{OP}:  \underset{\tau,\mu,\alpha,\varepsilon}{\max} ~~~ EE(\tau,\alpha,\mu,\varepsilon) \nonumber  \\
& ~~~~~~~~~ \text{s.t.} ~~ P_{f} \leq \overline{P}_{f}, ~~~ \text{for some } \overline{P}_f \in (0,1) \nonumber \\
& ~~~~~~~~~~~~~~~ P_{d} \geq \overline{P}_{d}, ~~~ \text{for some } \overline{P}_d \in (0,1)\nonumber \\
& ~~~~~~~~~~~~~~~ \alpha^\dagger \leq \alpha \leq 1, \nonumber \\
& ~~~~~~~~~~~~~~~ 0 \leq \mu \leq 1, \nonumber \\
& ~~~~~~~~~~~~~~~ 0 \leq \tau \leq 1. \label{Opabc}
\end{align}
In the next section, we provide the detailed solution of the above optimization  problem.

\section{Performance and Energy Efficiency Optimization}  \label{SecAnalysis}
In the following theorem, we derive the optimal value of the detection threshold, $\varepsilon^*$, that satisfies the primary interference constraint given in problem $\mathcal{OP}$.
\begin{theorem} \label{etathm}
	The optimal threshold $\varepsilon^*$ for the problem in $\mathcal{OP}$ is obtained when the constraint $P_d \geq \overline{P}_d$ is satisfied with equality, namely 
	\begin{equation} \label{optthr}
	\varepsilon^{*} \hspace{-0.1cm} = \hspace{-0.07cm} \sigma^2\left [ \left ( \gamma \hspace{-0.1cm} + \hspace{-0.1cm} 1 \right ) +\sqrt{\frac{2\gamma+1}{(1-\tau) N_s}}Q^{-1}\left [\overline{P}_d \right ]\right ].
	\end{equation}
\end{theorem}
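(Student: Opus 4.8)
\emph{Plan.} With $\tau,\alpha,\mu$ held fixed, the detection threshold $\varepsilon$ enters both the objective $\eeabc$ and the first two constraints of $\mathcal{OP}$ only through the pair $(P_d,P_f)$, since by \eqref{ptrequation} the transmit power $P_{\mathrm{tr}}$, and hence the logarithmic rate terms, do not depend on $\varepsilon$. My first step is therefore to record the monotonicity of \eqref{PfEqn} and \eqref{PDEqn} in $\varepsilon$: because $Q(\cdot)$ is strictly decreasing while the arguments $(\varepsilon/\sigma^2-1)\sqrt{(1-\tau)N_s}$ and $(\varepsilon/\sigma^2-\gamma-1)\sqrt{(1-\tau)N_s/(2\gamma+1)}$ are both strictly increasing in $\varepsilon$, the quantities $P_f$ and $P_d$ are both strictly decreasing in $\varepsilon$. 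Consequently the constraint $P_d\geq\overline{P}_d$ is equivalent to an \emph{upper} bound $\varepsilon\leq\varepsilon^*$ and $P_f\leq\overline{P}_f$ to a \emph{lower} bound $\varepsilon\geq\varepsilon_{\mathrm{LB}}$, so the admissible thresholds form an interval $[\varepsilon_{\mathrm{LB}},\varepsilon^*]$ whose right endpoint $\varepsilon^*$ is exactly the value at which $P_d=\overline{P}_d$.

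The core of the argument is then to show that the maximiser sits at this right endpoint, i.e. that $\eeabc$ is non-decreasing in $\varepsilon$ on $[\varepsilon_{\mathrm{LB}},\varepsilon^*]$; together with the first step this forces $\varepsilon^*$ to be optimal and the constraint $P_d\geq\overline{P}_d$ to be active. To expose the mechanism I would substitute $u\triangleq 1-P_d$ and $v\triangleq 1-P_f$ (both increasing in $\varepsilon$), rewrite $\rabc$ and $\eabc$ as affine functions of $u$ and $v$, so that $\eeabc$ becomes a ratio of two affine maps, and then differentiate; using $\mathrm{d}u/\mathrm{d}\varepsilon>0$ and $\mathrm{d}v/\mathrm{d}\varepsilon>0$ reduces the claim to a sign condition on $\partial\eeabc/\partial u$ and $\partial\eeabc/\partial v$. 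The intuition is that raising $\varepsilon$ lowers the false-alarm probability $P_f$, which recovers transmission opportunities in the $\mathcal{H}_0$ scenario and improves the throughput-to-energy ratio, while the detection floor $P_d=\overline{P}_d$ caps how far $\varepsilon$ may be pushed.

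I expect this monotonicity/sign step to be the main obstacle, because the influence of $\varepsilon$ on $\eeabc$ is genuinely a competition rather than a one-line claim: increasing $\varepsilon$ raises the numerator $\rabc$ through the HTT and $\mathcal{H}_0$ terms (whose coefficients are proportional to the active-transmission log rates), but it simultaneously shrinks the backscatter contribution $\rb=P(\mathcal{H}_{1})P_d(1-\alpha)\tau B_{b}$ and enlarges the energy denominator $\eabc$. A clean resolution therefore needs the false-alarm/HTT gain to dominate the backscatter loss --- a condition that holds when the backscatter rate $B_{b}$ is modest relative to the active-transmission rates, as is physically the case --- and it is precisely here that careful derivative bookkeeping, rather than a bare monotonicity assertion, is required.

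Finally, once the optimum is shown to occur at the largest admissible threshold with $P_d=\overline{P}_d$, the closed form follows by inversion. Setting the argument of $Q$ in \eqref{PDEqn} equal to $Q^{-1}(\overline{P}_d)$ gives $(\varepsilon^*/\sigma^2-\gamma-1)\sqrt{(1-\tau)N_s/(2\gamma+1)}=Q^{-1}(\overline{P}_d)$, and solving for $\varepsilon^*$ yields \eqref{optthr}. As a consistency check one should confirm that this $\varepsilon^*$ also satisfies $P_f\leq\overline{P}_f$, i.e. $\varepsilon^*\geq\varepsilon_{\mathrm{LB}}$, which is exactly the requirement that the false-alarm and detection constraints be jointly feasible.
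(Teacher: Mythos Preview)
Your proposal is correct and follows essentially the same route as the paper: both argue that $\eeabc$ is non-decreasing in $\varepsilon$ (so the optimum lies at the largest admissible threshold, where $P_d=\overline{P}_d$), and both concede that this monotonicity is \emph{not} unconditional but requires the active-transmission rate terms to dominate the backscatter contribution---the paper writes this as the side condition $W\kappa\log_2(1+P_{\mathrm{tr}}/P_0)\geq \tfrac{1}{E\mu}(1-\alpha)B_b+P_{\mathrm{tr}}\tfrac{R}{E^2}$, which is exactly the ``$B_b$ modest'' hypothesis you flag. The only cosmetic difference is that the paper differentiates $\eeabc$ directly in $\varepsilon$ (using the explicit Gaussian derivatives of $P_d,P_f$ and the comparison $\partial P_d/\partial\varepsilon\geq\partial P_f/\partial\varepsilon$) rather than passing through your affine reparametrisation $(u,v)=(1-P_d,1-P_f)$.
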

\begin{proof}
	The proof is provided in Appendix \ref{etathmProof}.
\end{proof}

In what follows,  Theorem~\ref{formu}  shows that when the spectrum is sensed to be idle, the energy efficiency is maximized when the ST transmits for the entire data transmission period,  that is, when $\mu = 1$.\footnote{It is   worth noting that in  the energy efficiency equation,   only $EE_h(\tau,\alpha,\mu,\varepsilon^*)$ depends on $\mu$.} 
\begin{theorem} \label{formu}
When $\alpha^{\dagger}\leq 1$  and $\alpha \geq \alpha^{\dagger}$, the energy efficiency due to the harvest-and-transmit mode, i.e., $EE_{h}(\tau,\alpha,\mu, \varepsilon^*)$ is  maximum for $\mu^*=1$.
\end{theorem}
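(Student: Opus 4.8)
The plan is to regard $\tau$, $\alpha$, and the optimal threshold $\varepsilon^{*}$ of Theorem~\ref{etathm} as fixed, so that (as noted in the footnote) only $EE_h$ varies with $\mu$, and to prove that $EE_h(\tau,\alpha,\mu,\varepsilon^{*})$ is nondecreasing on $(0,1]$; its maximiser is then the right endpoint $\mu^{*}=1$. Writing $EE_h = R_h/E$, the first step is to remove the implicit $\mu$-dependence hidden in $P_{\mathrm{tr}}$ by using the energy budget \eqref{ptrequation}, which yields the identity $\mu\tau P_{\mathrm{tr}} = \alpha\tau P_R - P_s(1-\tau) - \mu\tau P_c$. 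Substituting this into the denominator shows that $E = P_s(1-\tau) + \big[\alpha\tau P_R - P_s(1-\tau) - \mu\tau P_c\big]\,C$, with $C \triangleq P(\mathcal{H}_1)(1-P_d) + P(\mathcal{H}_0)(1-P_f) > 0$, is affine and strictly decreasing in $\mu$; in particular $\partial E/\partial\mu = -\tau P_c C < 0$.

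The second step is to read off the sign of $\partial EE_h/\partial\mu$, which by the quotient rule equals the sign of $N \triangleq (\partial R_h/\partial\mu)\,E - R_h\,(\partial E/\partial\mu)$. Since $E>0$, $R_h\ge 0$ and $\partial E/\partial\mu<0$, the contribution $-R_h(\partial E/\partial\mu)$ is already nonnegative, so it suffices to show that the throughput itself is nondecreasing, i.e. $\partial R_h/\partial\mu\ge 0$. Writing $R_h = \mu\tau W f(P_{\mathrm{tr}})$ with $f(t) \triangleq \kappa P(\mathcal{H}_1)(1-P_d)\log_2\!\big(1+\tfrac{t}{Z_I P_{T,PU}+P_0}\big) + P(\mathcal{H}_0)(1-P_f)\log_2\!\big(1+\tfrac{t}{P_0}\big)$ and $P_{\mathrm{tr}} = \tfrac{\alpha\tau P_R - P_s(1-\tau)}{\mu\tau} - P_c$, a short calculation gives $\partial R_h/\partial\mu = \tau W\,g(P_{\mathrm{tr}})$ with $g(t) \triangleq f(t) - (t+P_c)f'(t)$.

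The third step exploits the structure of $g$. Since $f$ is a positive combination of terms $\log_2(1+t/b)$ it is strictly concave, so $g'(t) = -(t+P_c)f''(t) > 0$; hence $g$ is strictly increasing. Because $P_{\mathrm{tr}}$ is a decreasing function of $\mu$, its minimum over $(0,1]$ is attained at $\mu=1$, and therefore $g(P_{\mathrm{tr}})\ge 0$ on the whole interval as soon as it holds at $\mu=1$. This reduces the entire theorem to a single scalar inequality, $g\big(P_{\mathrm{tr}}|_{\mu=1}\big)\ge 0$, that is, $f(P_{\mathrm{tr}})\ge (P_{\mathrm{tr}}+P_c)f'(P_{\mathrm{tr}})$ evaluated at the smallest admissible transmit power.

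I expect this last inequality to be the main obstacle. The elementary bound $\ln(1+u) > u/(1+u)$ only delivers $f(t) - t f'(t) > 0$, so it controls the genuine rate gain but not the extra circuit-power term $-P_c f'(t)$; whether $g$ stays nonnegative thus depends on the transmit power at $\mu=1$ being large enough, equivalently on the harvested energy $\alpha\tau P_R$ exceeding the consumed energy with sufficient margin. This is exactly what the hypotheses $\alpha^{\dagger}\le 1$ and $\alpha\ge\alpha^{\dagger}$ are intended to guarantee, and I would discharge the inequality by invoking them. As a safeguard for the borderline regime in which $g$ could dip slightly below zero, I would fall back on the exact factorisation $N/(\tau W) = P_s(1-\tau)\,g(P_{\mathrm{tr}}) + \mu\tau C\,(P_{\mathrm{tr}}+P_c)\,[\,f(P_{\mathrm{tr}}) - P_{\mathrm{tr}}f'(P_{\mathrm{tr}})\,]$, whose second summand is unconditionally positive by the same logarithmic inequality and thus compensates a mildly negative first summand, still yielding $N>0$ and hence $\mu^{*}=1$.
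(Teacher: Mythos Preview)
Your route differs substantially from the paper's. The paper rewrites $EE_h$ as $\bigl(X_1\mu\log_2(B+A/\mu)+X_2\mu\log_2(D+C/\mu)\bigr)/(X_3+X_4\mu)$ with $X_4\triangleq\tau P_{\mathrm{tr}}\{P(\mathcal{H}_1)(1-P_d)+P(\mathcal{H}_0)(1-P_f)\}$, then argues that $\lim_{\mu\to\infty}\partial EE_h/\partial\mu=0$ and asserts, after ``long but straightforward algebraic manipulations'', that $\partial^2 EE_h/\partial\mu^2\le 0$, so the first derivative is nonnegative on $(0,1]$. In differentiating $E$ it treats $X_4$ as a constant, obtaining $\partial E/\partial\mu=\tau P_{\mathrm{tr}}\{\cdots\}$; you instead substitute $\mu\tau P_{\mathrm{tr}}=\alpha\tau P_R-P_s(1-\tau)-\mu\tau P_c$ first and correctly get $\partial E/\partial\mu=-\tau P_c C$, which has the opposite sign. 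Your reduction to the scalar function $g(t)=f(t)-(t+P_c)f'(t)$ and your factorisation of $N$ are both valid and much cleaner than the paper's route.

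That said, the obstacle you flag is real and your safeguard does not close it. Take $\alpha=\alpha^{\dagger}$ with $\alpha^{\dagger}$ evaluated at $\mu=1$, so that $\alpha\tau P_R=\tau P_c+P_s(1-\tau)$. Then $P_{\mathrm{tr}}(\mu)=P_c(1-\mu)/\mu$, hence $P_{\mathrm{tr}}(1)=0$, $R_h(1)=0$ and $EE_h(1)=0$, while $EE_h(\mu)>0$ for every $\mu\in(0,1)$; the maximiser is strictly interior. In your decomposition this appears as $f(0)=0$, $f(0)-0\cdot f'(0)=0$, and $g(0)=-P_cf'(0)<0$, so $N/(\tau W)=P_s(1-\tau)\,g(0)<0$ at $\mu=1$: the ``unconditionally positive'' second summand vanishes precisely when you need it. Thus the hypotheses $\alpha^{\dagger}\le 1$ and $\alpha\ge\alpha^{\dagger}$ alone do not discharge the inequality; a strict margin $\alpha>\alpha^{\dagger}$ (equivalently $P_{\mathrm{tr}}|_{\mu=1}$ bounded away from $0$, controlled by $P_c$ and $P_s$) is needed. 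Your analysis makes this requirement explicit, whereas the paper's asserted second-derivative sign cannot hold uniformly through this boundary regime either.
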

\begin{proof}
The proof is provided in Appendix \ref{formuProof}.
\end{proof}

In the same context,  Theorem~\ref{alpha} allows the determination of the  conditions on the backscattering communication rate, such that an optimal value of $\alpha$, denoted by $
\alpha^*$, exists between $\alpha^\dagger$ and $1$, and provides  an analytic expression for $\alpha^*$, when the interference from the PU is neglected. The existence of $\alpha^*$ can be determined similarly for the case that includes the interference term, but it yields intractable results since the derivation of a  closed form solution is infeasible.
\begin{theorem} \label{alpha}
When $\alpha^\dagger \leq \alpha \leq 1$ and the backscatter transmission rate $B_{b} \in \left(\Bblb, \Bbub\right)$, where
\begin{align}
& \Bblb \triangleq \left(\frac{P(\mathcal{H}_{0})}{P(\mathcal{H}_{1})}\right) \left(\frac{(1-P_f)}{P_d \ln 2}\right) \nonumber \\
& ~~~~~~~~~~~~~ \times \left( \frac{\mu \tau W \tau P_R}{(P_0 -P_c )\mu \tau +P_s(1-\tau)+ \tau P_R} \right),
\end{align}
and
\begin{align}
& \Bbub \triangleq \left(\frac{P(\mathcal{H}_{0})}{P(\mathcal{H}_{1})}\right) \left(\frac{(1-P_f)}{P_d \ln 2}\right) \nonumber \\
& ~~~~~~~~~~~ \times \left( \frac{\mu \tau W \tau P_R}{(P_0 -P_c )\mu \tau +P_s(1-\tau)+ \alpha^\dagger \tau P_R} \right),
\end{align}
and the interference from the PU is neglected, then, there exists an optimal solution $\alpha^* \in [\alpha^{\dagger}, 1]$, given by
\begin{align}
& \alpha^*= \left(\frac{P(\mathcal{H}_{0}) }{P(\mathcal{H}_{1})}\right) \left( \frac{(1-P_f)}{P_d} \right) \left( \frac{\mu \tau W}{\ln 2}\right) \nonumber \\
& ~~~~~~~ - \left(\frac{(P_{0} +P_{c})\tau \mu+P_s(1-\tau)}{\tau P_R} \right).
\label{alphaopt}
\end{align}
\end{theorem}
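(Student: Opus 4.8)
The plan is to fix $\tau$, $\mu$ and the threshold at $\varepsilon=\varepsilon^*$ supplied by Theorem~\ref{etathm}, drop the interference term so that the harvest-then-transmit rate is taken in its interference-free form $\rh=P(\mathcal{H}_0)(1-P_f)\mu\tau W\log_2(1+P_{\mathrm{tr}}/P_0)$, and study $\eeabc$ as a function of the single variable $\alpha$ on the admissible interval $[\alpha^{\dagger},1]$. On this interval the trade-off is transparent: the backscatter throughput $\rb=P(\mathcal{H}_1)P_d(1-\alpha)\tau B_b$ is affine and strictly decreasing in $\alpha$, whereas $\rh$ is strictly increasing and concave, being $\log_2(1+P_{\mathrm{tr}}/P_0)$ composed with the affine map $P_{\mathrm{tr}}=(\alpha\tau P_R-P_s(1-\tau)-\mu\tau P_c)/(\mu\tau)$. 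Consequently $\rabc=\rb+\rh$ is concave in $\alpha$, and this concavity is what will certify that an interior stationary point is the global maximizer.

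Next I would impose first-order optimality, $\partial\eeabc/\partial\alpha=0$, i.e. $\rabc'\,\eabc=\rabc\,\eabc'$. The exact condition is transcendental, since it couples the logarithmic term in $\rh$ with the rational terms coming from $\eabc$; to extract a closed form I would reduce it to the governing throughput trade-off $\rb'(\alpha)+\rh'(\alpha)=0$ (equivalently, treating the slow $\alpha$-variation of the energy denominator as subdominant to the backscatter/HTT balance), which reads
\[
P(\mathcal{H}_1)P_d B_b=\frac{P(\mathcal{H}_0)(1-P_f)\,W P_R}{\ln 2\,\bigl(P_0+P_{\mathrm{tr}}\bigr)}.
\]
Once $P_{\mathrm{tr}}$ is replaced by its affine expression in $\alpha$ this is a linear equation, and solving it for $\alpha$ produces the claimed closed form \eqref{alphaopt}. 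I would then confirm a maximum directly from $\rb''=0$ and $\rh''<0$, rather than evaluating $\partial^2\eeabc/\partial\alpha^2$.

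The two-sided condition on $B_b$ is precisely what makes the unconstrained optimizer admissible. The map $B_b\mapsto\alpha^*$ defined by the stationarity relation is strictly decreasing, so requiring $\alpha^{\dagger}\le\alpha^*\le 1$ is equivalent to sandwiching $B_b$ between the two values obtained by inverting that relation at the endpoints: substituting $\alpha^*=1$ returns the lower threshold $\Bblb$ and $\alpha^*=\alpha^{\dagger}$ returns the upper threshold $\Bbub$, with monotonicity swapping the roles of the two endpoints. Evaluating $P_0+P_{\mathrm{tr}}$ at $\alpha=1$ and $\alpha=\alpha^{\dagger}$ then reproduces exactly the denominators appearing in $\Bblb$ and $\Bbub$.

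The hard part will be the reduction in the second step: the honest stationarity condition $\rabc'\eabc=\rabc\eabc'$ does not admit a closed-form root because of the mixed logarithmic/rational structure, so the closed form \eqref{alphaopt} hinges on isolating the throughput trade-off $\rb'+\rh'=0$ and justifying that it captures the optimizer. Verifying that the resulting stationary point is interior and maximal, and that the inverted window $\Bblb<B_b<\Bbub$ is genuinely equivalent to $\alpha^{\dagger}\le\alpha^*\le 1$, are the steps requiring care; the remaining manipulation, solving a linear equation for $\alpha$, is routine.
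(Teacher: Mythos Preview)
Your plan coincides with the paper's argument: fix $(\tau,\mu,\varepsilon^*)$, differentiate in $\alpha$, establish concavity from the strictly concave logarithmic HTT contribution, set the derivative to zero, and then invert the resulting monotone relation $B_b\leftrightarrow\alpha$ at the two endpoints $\alpha=1$ and $\alpha=\alpha^{\dagger}$ to produce $\Bblb$, $\Bbub$, and the linear solve for $\alpha^*$.

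Two small differences in execution are worth noting. First, you drop the interference term $\kappa P(\mathcal{H}_1)(1-P_d)\mu\tau W\log_2(1+P_{\mathrm{tr}}/(Z_IP_{T,PU}+P_0))$ at the outset; the paper carries it through the first- and second-derivative expressions and discards it only at the final step to obtain the closed forms. Either way the answer is the same. Second, and more substantively, you correctly observe that the exact stationarity condition $\rabc'\,\eabc=\rabc\,\eabc'$ is transcendental, and you plan to justify the reduction to $\rb'+\rh'=0$ by arguing that the $\alpha$-variation of $\eabc$ is subdominant. The paper does \emph{not} supply any such justification: it simply writes $\partial\eeabc/\partial\alpha=(\partial\rabc/\partial\alpha)/\eabc$ from the start, treating $\eabc$ as independent of $\alpha$, and likewise takes $\partial^2\eeabc/\partial\alpha^2=(\partial^2\rh/\partial\alpha^2)/\eabc$. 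So the stationarity equation it actually solves is exactly your $\rb'+\rh'=0$, without the intermediate approximation argument you anticipate. Your framing is more candid about what is being assumed, but the computations you would carry out are the paper's computations.
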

\begin{proof}
The proof is provided in Appendix \ref{alphaProof}.
\end{proof}

Once the optimal values $\varepsilon^*$, $\mu^*$ and $\alpha^*$ are determined, we need to determine  the optimal value of $\tau$, denoted by $\tau^*$, which accounts for the data transmission duration. In the following theorem, we show that the function $\teeabc(\tau, \alpha^*,  \mu^*, \varepsilon^*)$ is concave in $\tau$, and therefore, $\tau^*$ can be determined by standard methods, such as steepest gradient techniques.
\begin{theorem} \label{fortau} 
The function $\teeabc(\tau, \alpha^*, \mu^*, \epsilon^*)$ is concave in $\tau$.
\end{theorem}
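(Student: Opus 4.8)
The plan is to collapse $\teeabc(\tau,\alpha^{*},\mu^{*},\varepsilon^{*})$ into a function of the single variable $\tau$ and then prove concavity by showing that its second derivative is non-positive on the feasible set $\tau\in[0,1]$. First I would carry out the substitutions supplied by the three preceding theorems. Fixing $\varepsilon=\varepsilon^{*}$ makes the detection constraint active, so that $P_d\equiv\overline{P}_d$ is constant in $\tau$; inserting \eqref{optthr} into \eqref{PfEqn} then leaves a single sensing nonlinearity $P_f(\tau)=Q[\gamma\sqrt{(1-\tau)N_s}+\sqrt{2\gamma+1}\,Q^{-1}(\overline{P}_d)]$, a smooth and monotone function of $\tau$. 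Setting $\mu^{*}=1$ (Theorem~\ref{formu}) and substituting $\alpha^{*}$ from \eqref{alphaopt} — whose admissibility as the active maximizer in $[\alpha^{\dagger},1]$ is guaranteed by the hypothesis $B_b\in(\Bblb,\Bbub)$ of Theorem~\ref{alpha} — then expresses both the numerator $\rabc$ and the denominator $\eabc$ explicitly in $\tau$.

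It is convenient to isolate the net transmit energy $y(\tau)\triangleq\alpha^{*}\tau P_R-\tau P_c-P_s(1-\tau)$, which is non-negative because the constraint $\alpha\geq\alpha^{\dagger}$ is enforced. In this notation each harvest-then-transmit throughput contribution takes the perspective form $W\tau\log_{2}(1+y(\tau)/(\tau P_0))$ (with $P_0$ replaced by $Z_IP_{T,PU}+P_0$ in the interference-limited term), while $\eabc=P_s(1-\tau)+y(\tau)[P(\mathcal{H}_1)(1-\overline{P}_d)+P(\mathcal{H}_0)(1-P_f(\tau))]$. The useful structural fact is that $(x,z)\mapsto x\log_{2}(1+z/(xP_0))$ is the perspective of the concave map $z\mapsto\log_{2}(1+z/P_0)$ and is therefore jointly concave for $x>0$; composed with arguments that are affine in $\tau$, this would immediately deliver concavity of the throughput.

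I would then differentiate $\teeabc=\rabc/\eabc$ twice. Since $\eabc>0$ on the feasible region, the quotient rule gives $\teeabc''=(\rabc''\eabc^{2}-2\rabc'\eabc'\eabc-\rabc\eabc''\eabc+2\rabc(\eabc')^{2})/\eabc^{3}$, so it suffices to show the bracketed numerator is non-positive. I would organise this around the three sources of curvature: the concavity of the perspective-type throughput terms, the concavity of the backscatter part $\rb=P(\mathcal{H}_1)\overline{P}_d(1-\alpha^{*})\tau B_b$, and the sign-definite second derivative that the Gaussian tail $Q(\cdot)$ confers on $P_f(\tau)$. The monotonicity and non-negativity of $P_f(\tau)$, $y(\tau)$ and $\rabc(\tau)$ over the operating range are the intermediate facts I expect to invoke.

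The main obstacle is that substituting the closed form of $\alpha^{*}(\tau)$ injects a $\tau^{2}$ term into $y(\tau)$, so the argument of each logarithm is no longer affine in $\tau$ and the clean perspective-concavity argument does not close the proof by itself; the cross terms that survive in the numerator of $\teeabc''$ must then be bounded by direct computation, possibly after restricting attention to the operating regime of interest. The alternative I would keep in reserve is to prove that $\teeabc(\tau,\alpha,\mu^{*},\varepsilon^{*})$ is \emph{jointly} concave in $(\tau,\alpha)$ on the convex region $\{\alpha^{\dagger}\leq\alpha\leq1\}$: partial maximization over $\alpha$ then preserves concavity and yields the claim for $\alpha=\alpha^{*}(\tau)$ at once, with the difficulty shifting to the joint concavity of the ratio, where the interplay between the backscatter numerator and the $(1-P_f)$-weighted denominator is the delicate point.
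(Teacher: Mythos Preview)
Your plan is essentially the same route the paper takes: reduce to a single-variable problem by plugging in $\varepsilon^{*}$ (so that $P_d=\overline{P}_d$ is constant and only $P_f$ remains $\tau$-dependent), $\mu^{*}=1$, and $\alpha^{*}(\tau)$, then differentiate twice and argue the second derivative is non-positive. The paper does exactly this --- it writes $\alpha^{*}$ explicitly as a function of $\tau$, computes $\partial\alpha^{*}/\partial\tau$, writes out $\partial EE_b/\partial\tau$ and $\partial EE_h/\partial\tau$ in full, records $\partial P_f/\partial\tau$, asserts that $P_f$ is concave in $\tau$, and then states that the second derivatives of $EE_b$ and $EE_h$ can be shown to be negative, with the actual sign computation ``omitted for brevity.'' So the final step you flag as the obstacle is precisely the step the paper defers as well.

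Where you differ is in the extra machinery you bring in. The paper does not use the perspective-of-a-concave-function observation at all; it proceeds by brute-force differentiation and bookkeeping. Your perspective argument is a nicer structural explanation for the HTT throughput pieces, but, as you correctly note, it does not survive the substitution of $\alpha^{*}(\tau)$ cleanly, so in the end you are forced back to the same direct calculation the paper performs. Your alternative --- establishing \emph{joint} concavity in $(\tau,\alpha)$ and then invoking that partial maximization preserves concavity --- is a genuinely different and more elegant strategy that the paper does not consider; if it can be made to work it would bypass the ugly cross terms entirely, but the paper gives no help there, and the ratio structure with the $(1-P_f(\tau))$-weighted denominator is, as you say, the delicate point.
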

\begin{proof}
	The proof is provided in  Appendix \ref{fortauProof}.
\end{proof}

Finally, the maximum energy efficiency can be evaluated as follows: 
\begin{align}
	&EE_{\max}(\tau^*, \alpha^*,\mu^*, \varepsilon^*) \nonumber \\
	&  = \begin{cases}
	\max \left[EE_{b}(\tau^*, 0, \mu^*, \varepsilon^*), \right. \\
	\left. ~EE_{b}(\tau^*, \alpha^*,\mu^*,\varepsilon^*)+EE_{h}(\tau^*, \alpha^*,\mu^*,\varepsilon^*)\right],\\~~~~~~~~~~~~~~~~~~~~~~~ ~~~~~~~~~~~~~~~~~~~~~~ \text{ if } \alpha^{\dagger}\leq \alpha^* \leq 1, \\ 
	EE_{b}(\tau^*,0,\mu^*,\varepsilon^*), ~~~~~~~~~~~~~~~~~~~~~~~\text{otherwise.} 
	\end{cases} \hspace{-0.5cm} \label{EEfinal}
	\end{align}

\section{Numerical Results} \label{SecNumRes}

\begin{figure}[t]
	\centering
	\includegraphics [width=3.5in]{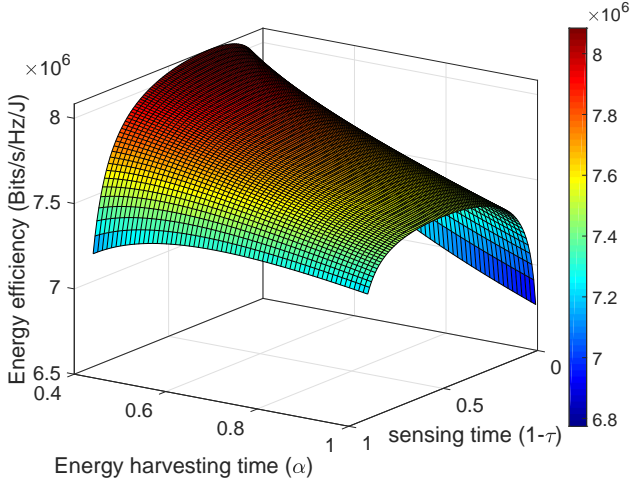}
	\caption{Variation of energy efficiency with $\alpha$ and $\tau$, for $\mu^*=1$ and $\varepsilon^*$ in \eqref{optthr}. Energy efficiency is concave in both $\alpha$ and $\tau$.} \label{3DFig}
\end{figure}

\begin{figure}[t]
	\centering
	\includegraphics[width=3.5in]{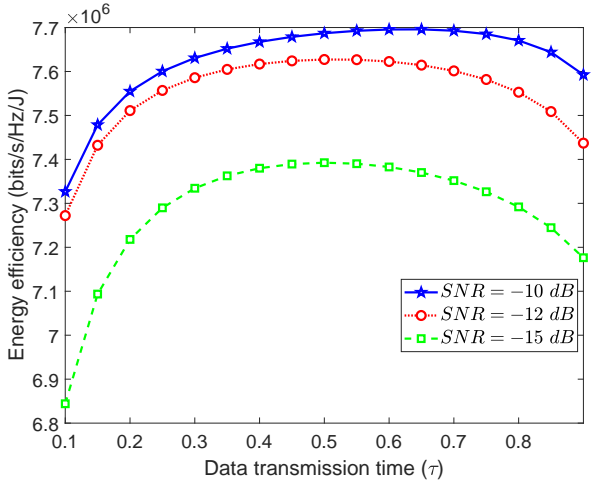}
	\caption{Variation of the energy efficiency $\teeabc(\tau, \alpha^*, \mu^*, \varepsilon^*)$ with the data transmission time, $\tau$, for different SNR values.} \label{FigEEvsTau_SNR}
\end{figure}

\begin{figure}[t]
	\centering
	\includegraphics [width=3.5in]{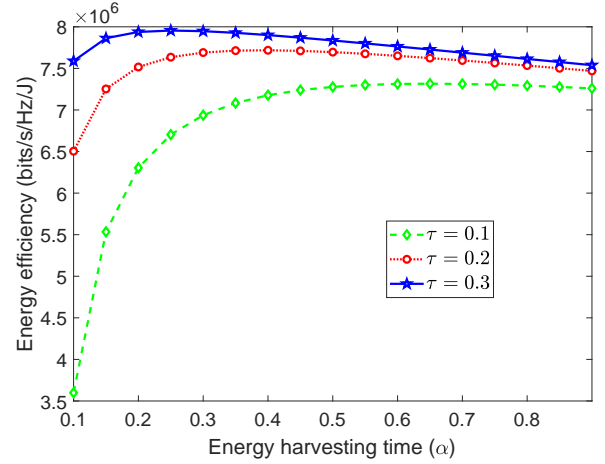}
	\caption{Variation of the energy efficiency $\teeabc(\tau, \alpha^*, \mu^*, \varepsilon^*)$ with $\alpha$, for different values of $\tau$.} \label{FigEEvsAlpha}
\end{figure}

In this section, we present the numerical results on the performance of the ABC-HTT-based CR network. To this end, we consider the following parameters: the target probability of detection, $\overline{P}_{d}$, and false-alarm probability, $\overline{P}_{f}$, are set to be $0.9$ and $0.1$, respectively \cite{Liang_IEEE_2008}, whereas the prior probabilities $P(\mathcal{H}_0)$ and $P(\mathcal{H}_1)$ are set to $0.75$ and $0.25$, respectively. The signal bandwidth and the transmitted power are set to be $6$ MHz, and $17$ kW, respectively \cite{Hoang_IEEE_2017}. 
Also, without a loss of generality and unless stated otherwise, we assume the following values:  The number of observations is $2000$, $B_b=50 \times 10^3$ bps, SNR = $-10$ dB, $\kappa=1$, $P_s = 1$ mW, $P_c = 0.1$ mW, $\delta=0.6$, \textcolor{black}{$d=2.475$ km, $G_T = G_R = 6$ dBi in the Friis' equation, such that $P_R = 0.25$ W}, and the path loss and other impairments due to primary interference $X_l = 0.5 \times 10^{-3}$.

Figure \ref{3DFig} shows the variation of the energy efficiency with respect to the parameters $\alpha$ and $\tau$, with $\mu^*=1$ and $\varepsilon^*$ chosen according to  \eqref{optthr}. The sampling frequency is chosen such that the number of samples is $1000$, and the sensing power $P_{s}=0.3$ mW.  Also, we set the partial throughput factor, $\kappa = 0.6$, and the energy harvesting efficiency, $\delta = 0.6$. It is evident that the energy efficiency is concave with respect to both $\alpha$ and $\tau$. Also, for a small value of $\alpha$, the energy efficiency is small since the throughput decreases due to little energy harvesting. However, if ST spends more time on energy harvesting, i.e., if $\alpha$ increases, the energy efficiency  decreases  further since the backscattering communication is not efficiently utilized.

\begin{figure}[t]
	\centering
	\includegraphics [width=3.5in]{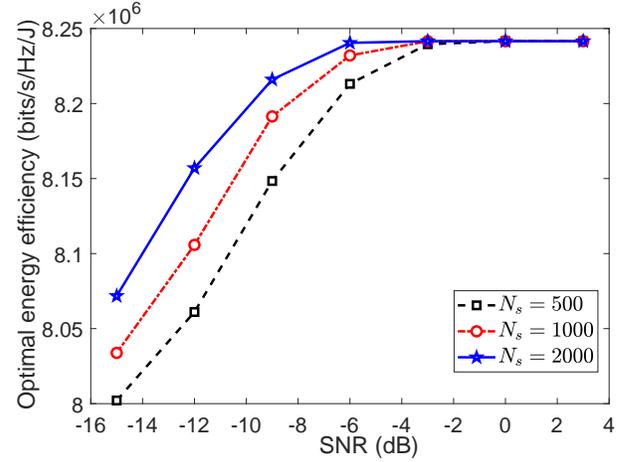}
	\caption{Variation of the optimal energy efficiency $\teeabc(\tau^*, \alpha^*, \mu^*, \varepsilon^*)$ with SNR, for different values of number of samples $N_s$. An increase in the sampling frequency for a fixed $\tau$ increases the number of samples.}
	\label{eevssnrvsfs}
\end{figure}

\begin{figure}
	\centering
	\includegraphics [width=3.5in]{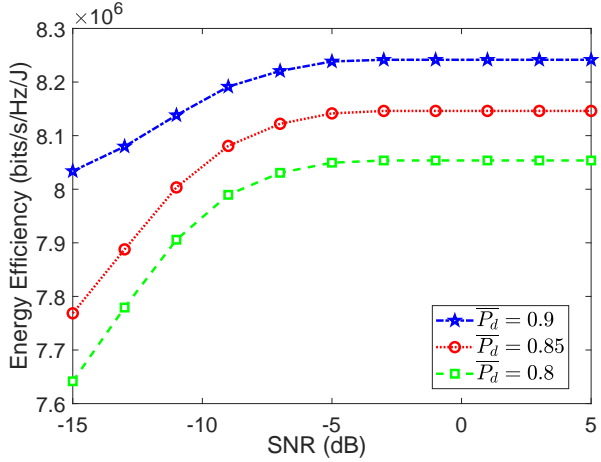}
	\caption{Variation of the optimal energy efficiency $\teeabc(\tau^*, \alpha^*, \mu^*, \varepsilon^*)$ with SNR, for different values of $\overline{P}_d$.}
	\label{EEsnrpdbar}
\end{figure}
   
\begin{figure}
	\centering
	\includegraphics [width=3.5in]{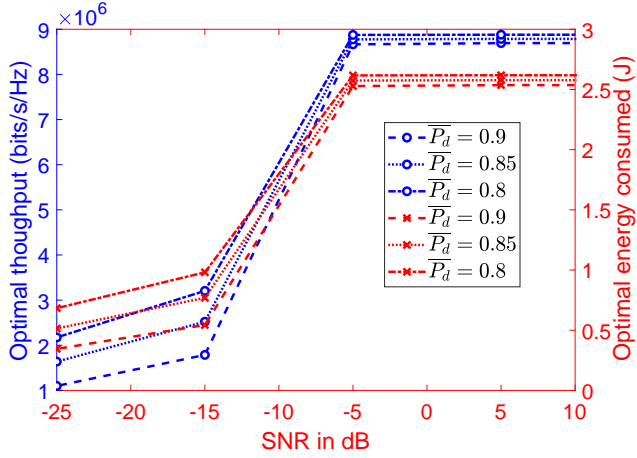}
	\caption{Variation of the optimal achievable throughput and optimal average energy consumption with SNR, for different values of $\overline{P}_d$. Although a relaxed constraint with $\overline{P}_d$  improves the achievable throughput, it also allows an increase in the energy consumption due to more transmission opportunities, which is   significant at low SNR values.}
	\label{thECsnrpdbar}
\end{figure}   

Figure \ref{FigEEvsTau_SNR} illustrates  the variation of the value of $\teeabc(\tau, \alpha^*, \mu^*, \varepsilon^*)$ with $\tau$ for different SNR values. As expected, the energy efficiency increases with SNR. Moreover, the value of optimum $\tau$ also increases with SNR, since a larger SNR results in lower sensing time required to satisfy the primary interference constraints and thus, to a  higher data transmission time. Similarly,  Figure~\ref{FigEEvsAlpha} demonstrates  the variation of the value of $\teeabc(\tau, \alpha, \mu^*, \varepsilon^*)$ for different values of $\tau$. 
It is evident  that the optimal $\alpha^*$ exists for each $\tau$, and it decreases with an increase in $\tau$, which is clear from \eqref{alphaopt}. Moreover, it is intuitive to note that as $\tau$ increases, the energy efficiency increases.

Figure \ref{eevssnrvsfs} illustrates the variation of the optimal energy efficiency for different values of the received SNR at ST, for a number of samples $N_s$= $500$, $N_s$ =  $1000$ and $N_s$ = $2000$. It is shown that as the number of samples increases, the detection accuracy increases, which improves the secondary throughput and the energy efficiency. Likewise,   Figure~\ref{EEsnrpdbar} presents the variation of the optimal energy efficiency with respect to SNR for different target probability of detection values, $\overline{P}_{d}$. It is noted here that  the performance is expected to increase with a decrease in $\overline{P}_{d}$, since a lower tolerance on the probability of detection improves the average throughput and energy consumption. However, the plots exhibit a different trend. 
For example,  Figure~\ref{thECsnrpdbar} reveals that a lower value of $\overline{P}_d$ implies that the PU interference constraint is more relaxed, which in turn yields a better throughput. However, since a lower value of $\overline{P}_d$ also results in more transmission opportunities, the average energy consumption also increases, which is significant at the low SNR regime. Therefore, the trend on variation of EE with $\overline{P}_d$ depends largely  on the choice of system parameters, which explains the trend observed in Figure~\ref{EEsnrpdbar}. Also, the optimal energy efficiency saturates after a certain SNR, since a further improvement in SNR will only improve the detection performance by a small margin, resulting in little improvement in the overall energy efficiency.

\begin{figure}
	\centering
	\includegraphics [width=3.8in]{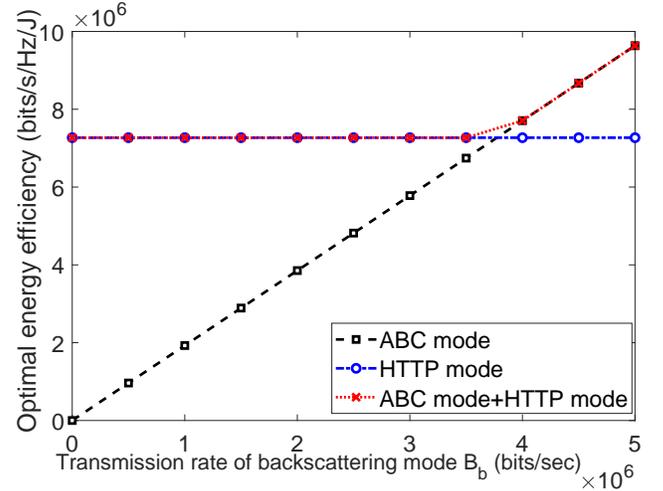}
	\caption{Variation of the optimal energy efficiency $\teeabc(\tau^*, \alpha^*, \mu^*, \varepsilon^*)$ with different values of ABC rates, $B_b$. Using only the HTT mode   yields better performance for small values of $B_b$. Conversely,   using only the ABC mode performs  better for larger values of $B_b$. Therefore, combining the ABC and HTT modes yields an overall better performance in terms of energy efficiency, across all values of $B_b$.}
	\label{ABCHTTmodes}
\end{figure}   

\begin{figure}
	\centering
	\includegraphics [width=3.7in]{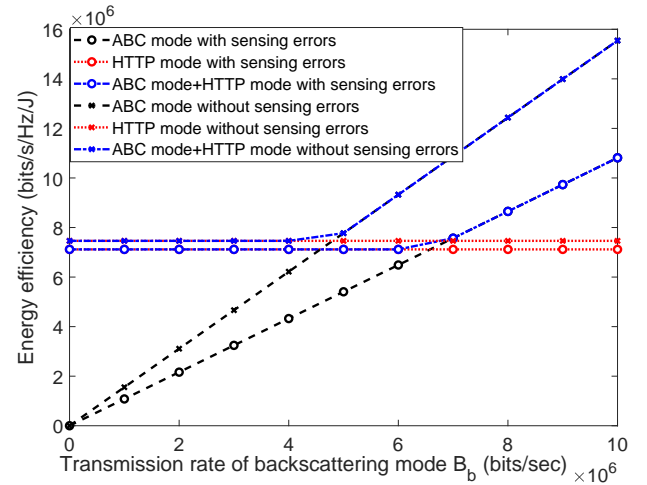}
	\caption{Effect of sensing errors on the optimal energy efficiency. Combining ABC and HTT modes yields a higher overall energy efficiency.}
	\label{EEwithwithoutError}
\end{figure}

Figure  \ref{ABCHTTmodes} demonstrates  the variation of the optimal energy efficiency $\teeabc(\tau^*, \alpha^*, \mu^*, \varepsilon^*)$ with different values of backscattering communication rates, $B_b$ for the indicative values of  $N_s = 1000$, and $P_R = 1$ W. It is evident that  the achievable rate  due to energy harvesting is not dependent upon  $B_b$. Also, for lower values of $B_b$, using the HTT mode alone yields a better energy efficiency,  whereas for larger values of $B_b$, the ABC mode exhibits  a better performance. Therefore, operating the CR network in a combination of the ABC and HTT modes yields an improvement in the overall performance in terms of energy efficiency, across all values of $B_b$.

Finally, Figure~\ref{EEwithwithoutError} illustrates  the effect of sensing errors on the performance of the ABC-HTT-based CR network. 
In order to calculate the performance of the system without sensing errors, we  follow the procedure described in \cite{Hoang_IEEE_2017}, which considers the simplistic case of no sensing errors. By choosing the indicative values  $N_s = 2000$, and $P_R = 1$ W, it is shown  that the optimal energy efficiency achieved with no sensing errors is, as expected, higher  than  the realistic case with present sensing errors. Additionally, we observe that the energy efficiency increases in both cases, due to the use of both ABC and HTT modes. That is, as expected, the energy efficiency achieved due to only ABC or HTT mode is lower than that obtained by combining the two modes, in the presence and absence of sensing errors. \textcolor{black}{Moreover, it is recalled that we have set $\kappa=1$. In terms of energy efficiency, this corresponds to the best possible case from the CR network point of view. In our formulation, the choice of $\kappa$ only affects the average achievable throughput, and not the energy consumption. Therefore, the energy efficiency performance deviation in Fig.~\ref{EEwithwithoutError} -- between the ABC and HTT modes with and without sensing errors -- constitutes a lower bound. In fact, choosing any other value of $\kappa$ will result in a larger performance deviation.}

\section{Conclusion}  \label{SecConc}
We investigated  the performance of ABC-HTT-based cognitive radio networks in terms of energy efficiency in the presence of sensing errors as they are encountered in realistic wireless communication scenarios. In this context, we derived novel analytic expressions for the average achievable throughput, average energy consumption and energy efficiency of the considered  network. Then, we  formulated an optimization problem that maximizes the energy efficiency of the CR network operating in ABC  and HTT modes, for a given set of constraints including the primary interference constraint. Finally, we derived the optimal set of parameters that maximize the energy efficiency of the CR system. Capitalizing on the offered results, we quantified  the performance of the CR network under the considered  setup and demonstrated the performance improvement achieved in the CR network when incorporating a combination of ABC  and HTT modes. The offered results provided interesting theoretical and technical insights on the behavior of backscatter systems that are expected to be useful in the design and deployment of future systems in the context of various wireless applications of interest. 

\begin{appendices}
\section{Proof of Theorem \ref{etathm}} \label{etathmProof}
In order to establish that the constraint $P_{d} \geq \overline{P}_{d}$ is satisfied with equality, it is sufficient to  show that $ \partial \teeabc(\tau,\alpha,\mu,\varepsilon) / \partial \varepsilon  \geq 0$, for all $\varepsilon$. To this end, we observe that
\begin{align}
& \frac{\partial \teeabc(\tau,\alpha,\mu,\varepsilon)}{\partial \varepsilon} = \frac{\frac{\partial \rabc(\tau, \alpha,\mu, \varepsilon)}{\partial \varepsilon} \eabc(\tau, \alpha,\mu, \varepsilon)}{[\eabc(\tau, \alpha,\mu, \varepsilon)]^2} \nonumber \\
& ~~~~~~~~~~~~~~~~~~~~~ -\frac{\rabc(\tau, \alpha,\mu, \varepsilon)\frac{\partial \eabc(\tau, \alpha,\mu, \varepsilon)}{\partial \varepsilon}}{[\eabc(\tau, \alpha,\mu, \varepsilon)]^2}.
\label{firstdriEE}
\end{align}
Furthermore, taking the first derivative of \eqref{Ravg} with respect to $\epsilon$, namely  
\begin{align} \nonumber
& \rabc(\tau, \alpha,\mu, \varepsilon) = P(\mathcal{H}_{1})P_{d} (1 \hspace{-0.1cm} - \hspace{-0.1cm} \alpha) \tau B_{b} \hspace{-0.1cm} + \hspace{-0.1cm}  \kappa P(\mathcal{H}_{1})(1-P_{d}) \nonumber \\ 
& ~~~~~~~~~~~~~~ \mu \tau  W \hspace{-0.05cm} \log_{2} \hspace{-0.1cm} \left( \hspace{-0.1cm} 1 \hspace{-0.1cm} +  \hspace{-0.1cm} \frac{P_{tr}}{Z_{I}P_{T,PU}+ P_{0}} \hspace{-0.1cm}\right) \nonumber \\ 
& ~~~~~~~~~~~~~~  + P(\mathcal{H}_{0})(1-P_{f}) \mu \tau  W \hspace{-0.05cm} \log_{2} \hspace{-0.1cm} \left( \hspace{-0.1cm} 1 \hspace{-0.1cm} +  \hspace{-0.1cm} \frac{P_{tr}}{P_{0}} \hspace{-0.1cm}\right),
\end{align}
yields
\begin{align}
& \frac{\partial \rabc(\tau, \alpha,\mu, \varepsilon)}{\partial \varepsilon}=P(\mathcal{H}_{1})(1-\alpha)\tau B_{b}\frac{\partial P_{d}}{\partial \varepsilon} -\textcolor{black}{\kappa} \mu \tau W P(\mathcal{H}_{1}) \nonumber \\
& ~~~~~~~~~~~~~~ \log_{2} \hspace{-0.1cm} \left( \hspace{-0.1cm} 1 \hspace{-0.1cm} +  \hspace{-0.1cm} \frac{P_{tr}}{Z_{I}P_{T,PU}+ P_{0}} \hspace{-0.1cm}\right)\frac{\partial P_{d}}{\partial \varepsilon}  \nonumber \\
& ~~~~~~~~~~~~~~ - \mu \tau W P(\mathcal{H}_{0}) \log_{2} \hspace{-0.1cm} \left( \hspace{-0.1cm} 1 \hspace{-0.1cm} +  \hspace{-0.1cm} \frac{P_{tr}}{P_{0}} \hspace{-0.1cm}\right)\frac{\partial P_{f}}{\partial \varepsilon},
\label{firstdriR}
\end{align}
and
\begin{align}
& \frac{\partial \eabc(\tau, \alpha,\mu, \varepsilon)}{\partial \varepsilon}=-\mu \tau P_{tr} \left [P(\mathcal{H}_{1})\frac{\partial P_{d}}{\partial \epsilon}+P(\mathcal{H}_{0})\frac{\partial P_{f}}{\partial \epsilon}\right ].
\label{firstdriE}
\end{align}
Using the above, and substituting \eqref{firstdriR} and \eqref{firstdriE} in \eqref{firstdriEE}, and carrying out  some algebraic manipulations, one obtains 
\begin{align}
& \frac{\partial \eeabc(\tau,\alpha,\mu,\varepsilon)}{\partial \varepsilon} = \frac{\partial P_{d}}{\partial \varepsilon}\frac{P(\mathcal{H}_{1})(1-\alpha)\tau B_b}{\eabc(\tau, \alpha,\mu, \varepsilon)} \nonumber \\
& ~~~ -\frac{\partial P_{d}}{\partial \varepsilon} P(\mathcal{H}_{1})\textcolor{black}{\kappa}\mu \tau W \log_{2} \left ( 1+\frac{P_{tr}}{Z_{I}P_{T,PU}+ P_{0}}  \right ) \nonumber \\
& ~~~ + \frac{\partial P_{d}}{\partial \varepsilon}\mu \tau P_{tr}P(\mathcal{H}_{1}) \frac{\rabc(\tau, \alpha,\mu, \varepsilon)}{\eabc(\tau, \alpha,\mu, \varepsilon)^2} \nonumber \\
& ~~~ -\frac{\partial P_{f}}{\partial \varepsilon}\left [ P(\mathcal{H}_{0})\mu \tau W \log_{2} \left ( 1+\frac{P_{tr}}{ P_{0}}  \right )+\mu \tau P_{tr}P(\mathcal{H}_{0}) \right] \nonumber \\
& ~~~~~~~~~~~~~~ \left[\frac{\rabc(\tau, \alpha,\mu, \varepsilon)}{\eabc(\tau, \alpha,\mu, \varepsilon)^2}\right]. \label{RqdEqn1}
\end{align}
In addition, it is noted that 
\begin{align}
  \frac{\partial P_f}{\partial  \varepsilon}=-\exp {\left [ - \dfrac{(1-\tau)N_s(\frac{\epsilon}{\sigma^2}-1)^2}{2}\right ]} \frac{\sqrt{N_s(1-\tau)}}{\sqrt{2\pi \sigma^2}} \leq 0, \label{dpf} 
\end{align}
and
\begin{align}
 \frac{\partial P_d}{\partial  \varepsilon}=-\exp {\left [ - \dfrac{(1-\tau)N_s(\frac{\epsilon}{\sigma^2}-\gamma-1)^2}{2(1+2\gamma)}\right ]} \frac{\sqrt{\frac{N_s(1-\tau)}{1+2\gamma}}}{\sqrt{2\pi \sigma^2}} \leq 0, \label{dpd}
\end{align}
with $ \partial P_d/\partial  \varepsilon \geq  \partial P_f/\partial  \varepsilon$. Following these results, \eqref{RqdEqn1} can be further simplified and shown to be non-negative if
\begin{align}
& P(\mathcal{H}_{1}) \textcolor{black}{\kappa} \mu \tau W   \log_{2} \left ( 1+\frac{P_{tr}}{Z_{I}P_{T,PU}+ P_{0}}  \right ) \nonumber \\
&~ -\frac{1}{E}P(\mathcal{H}_{1}) (1-\alpha)\tau B_b - \mu \tau P_{tr}P(\mathcal{H}_{1})\frac{R}{E^2}>0.
\end{align}
It is easy to verify that the above requirement holds when $W$ and $P_{tr}$ are selected such that 
\begin{equation}
W \textcolor{black}{\kappa} \log\left(1+\frac{P_{tr}}{P_0}\right) \geq \frac{1}{E \mu}(1-\alpha) B_b +P_{tr}\frac{R}{E^2},
\end{equation}
in which case, $ \partial \teeabc / \partial \varepsilon  \geq 0$, for all $\varepsilon$. Based on this result, it is sufficient to choose the value of $\varepsilon$ when $P_{d}=\overline{P}_d$ is satisfied.

\section{Proof of Theorem \ref{formu}} \label{formuProof}
As mentioned earlier, we consider only the term $EE_{h}(\tau,\alpha,\mu, \varepsilon^*)$, since $EE_{b}(\tau,\alpha,\mu, \varepsilon^*)$ is independent of $\mu$. Next, it is noted that 
\begin{figure*}[ht]
\begin{align}
& EE_{h}(\tau,\alpha,\mu, \varepsilon^*) =    \frac{\kappa P(\mathcal{H}_{1})(1-P_{d}) \mu \tau  W \hspace{-0.05cm} \log_{2} \hspace{-0.1cm} \left( \hspace{-0.1cm} 1 \hspace{-0.1cm} +  \hspace{-0.1cm} \frac{P_{tr}}{Z_{I}P_{T,PU}+ P_{0}} \hspace{-0.1cm}\right)} {P_{s}(1 \hspace{-0.1cm} - \hspace{-0.1cm} \tau) \hspace{-0.1cm} + \hspace{-0.1cm} \mu \tau P_{t} \left \{1 \hspace{-0.1cm} - \hspace{-0.1cm} P(\mathcal{H}_{1}) P_{d} \hspace{-0.1cm} - \hspace{-0.1cm} P(\mathcal{H}_{0})P_{f}\right \}} \nonumber \\
& ~~~~~~~~~~~~~~~~~~~~~~~~~~~~~~~~~ + \frac{P(\mathcal{H}_{0})(1-P_{f}) \mu \tau  W \hspace{-0.05cm} \log_{2} \hspace{-0.1cm} \left( \hspace{-0.1cm} 1 \hspace{-0.1cm} +  \hspace{-0.1cm} \frac{P_{tr}}{P_{0}} \hspace{-0.1cm}\right) } {P_{s}(1 \hspace{-0.1cm} - \hspace{-0.1cm} \tau) \hspace{-0.1cm} + \hspace{-0.1cm} \mu \tau P_{t} \left \{1 \hspace{-0.1cm} - \hspace{-0.1cm} P(\mathcal{H}_{1}) P_{d} \hspace{-0.1cm} - \hspace{-0.1cm} P(\mathcal{H}_{0})P_{f}\right \}}. \label{Ehequ1}
\end{align}
\hrulefill
\end{figure*}
which upon substituting the expression for $P_{tr}$ from \eqref{ptrequation} yields \eqref{Ehequ2}.
\begin{figure*}[ht]
\begin{align}
& EE_{h}(\tau,\alpha,\mu, \varepsilon^*) =    \frac{\kappa P(\mathcal{H}_{1})(1-P_{d}) \mu \tau  W \hspace{-0.05cm} \log_{2}\left(1+\frac{\alpha \tau P_{R}-\mu \tau P_{c}-P_{s}(1-\tau)}{\left[Z_{I}P_{T,PU}+P_{0}\right]\tau \mu}\right)}{P_{s}(1 \hspace{-0.1cm} - \hspace{-0.1cm} \tau) \hspace{-0.1cm} + \hspace{-0.1cm} \mu \tau P_{t} \left \{1 \hspace{-0.1cm} - \hspace{-0.1cm} P(\mathcal{H}_{1}) P_{d} \hspace{-0.1cm} - \hspace{-0.1cm} P(\mathcal{H}_{0})P_{f}\right \}} \nonumber \\ 
& ~~~~~~~~~~~~~~~~~~~~~~~~~~~~~~~~~ + \frac{ P(\mathcal{H}_{0})(1-P_{f}) \mu \tau  W \hspace{-0.05cm} \log_{2}\left(1+\frac{\alpha \tau P_{R}-\mu \tau P_{c}-P_{s}(1-\tau)}{\left[P_{0}\right]\tau \mu}\right)}{P_{s}(1 \hspace{-0.1cm} - \hspace{-0.1cm} \tau) \hspace{-0.1cm} + \hspace{-0.1cm} \mu \tau P_{t} \left \{1 \hspace{-0.1cm} - \hspace{-0.1cm} P(\mathcal{H}_{1}) P_{d} \hspace{-0.1cm} - \hspace{-0.1cm} P(\mathcal{H}_{0})P_{f}\right \}}. \label{Ehequ2}
\end{align}
\hrulefill
\end{figure*}
\textcolor{black}{Compactly, \eqref{Ehequ2} can be re-written as}
\begin{align}
EE_{h}(\tau,\alpha,\mu, \varepsilon^*) =\frac{X_1 \mu\log_{2}\left[B \hspace{-0.1cm} + \hspace{-0.1cm} \frac{A}{\mu}\right]+X_2 \mu\log_{2}\left[D \hspace{-0.1cm} + \hspace{-0.1cm} \frac{C}{\mu}\right]}{X_3 + X_4 \mu},
\end{align}
where
\begin{align}
& X_1 \triangleq \textcolor{black}{\kappa} P(\mathcal{H}_{1})(1 \hspace{-0.1cm} - \hspace{-0.1cm} P_{d})\tau W, \\
& X_2 \triangleq P(\mathcal{H}_{0})(1 \hspace{-0.1cm} - \hspace{-0.1cm} P_{f})\tau W, \\
& X_3 \triangleq P_s(1-\tau), \\
& X_4 \triangleq \tau P_{tr}\{ P(\mathcal{H}_{1})(1-P_{d})+P(\mathcal{H}_{0})(1-P_{f}) \}, \\
& A\triangleq \frac{\alpha P_R-P_{s}(1 \hspace{-0.1cm} - \hspace{-0.1cm} \tau)}{[Z_{I}P_{T,PU}+ P_{0}]\tau}, \\
& B\triangleq 1-\frac{P_c}{[Z_{I}P_{T,PU}+ P_{0}]}, \\
& C \triangleq \frac{\alpha \tau P_R-P_s(1-\tau)}{P_{0}\tau},\end{align}
and
\begin{align} 
D \triangleq 1-\frac{ P_c}{P_{0}},
\end{align}
such that $X_1$, $X_2$, $X_3$, $X_4$, $A$, $B$, $C$ and $D$ are positive constants. 
Next, by evaluating the first derivative of $ EE_{h}(\tau,\alpha,\mu, \varepsilon^*)$ with respect to $\mu$ it follows that 
\begin{align}
& \frac{\partial EE_{h}(\tau,\alpha,\mu,\varepsilon^*)}{\partial \mu}\hspace{-0.1cm} =\hspace{-0.1cm} \frac{\frac{\partial R_{h}(\tau, \alpha,\mu, \varepsilon^*)}{\partial \mu} E_{h}(\tau, \alpha,\mu, \varepsilon^*)}{[E_{h}(\tau, \alpha,\mu, \varepsilon^*)]^2} \nonumber \\
& ~~~~~~~~~~~~  - \frac{R_{h}(\tau, \alpha,\mu, \varepsilon^*)\hspace{-0.1cm}\frac{\partial E_{h}(\tau, \alpha,\mu, \varepsilon^*)}{\partial \mu}}{[E_{h}(\tau, \alpha,\mu, \varepsilon^*)]^2}, \label{firstdriEEwrtomu}
\end{align}
where  
\begin{align}
&  \frac{\partial R_{h}(\tau, \alpha,\mu, \varepsilon^*)}{\partial \mu} \hspace{-0.1cm} = \hspace{-0.1cm} X_1\left \{ \hspace{-0.1cm} \frac{-A}{\mu(B+\frac{A}{\mu}) \ln (2)} \hspace{-0.1cm} + \hspace{-0.1cm} \log_2\left( \hspace{-0.05cm} B \hspace{-0.1cm} + \hspace{-0.1cm} \frac{A}{\mu} \hspace{-0.05cm} \right) \hspace{-0.1cm} \right \} \nonumber \\
& ~~~~~~~~~~~~  +\hspace{-0.1cm} X_2\left \{ \hspace{-0.1cm} \frac{-C}{\mu(D+\frac{C}{\mu}) \ln (2)} \hspace{-0.1cm} + \hspace{-0.1cm} \log_2\left( \hspace{-0.05cm} D \hspace{-0.1cm} + \hspace{-0.1cm} \frac{C}{\mu} \hspace{-0.05cm} \right) \hspace{-0.1cm} \right \}, \label{Rdiffwrtmu} 
\end{align}
and
\begin{align}
 \frac{\partial E_{h}(\tau, \alpha,\mu, \varepsilon^*)}{\partial \mu} \hspace{-0.1cm} = \hspace{-0.1cm} \tau P_{t} \left \{1  - P(\mathcal{H}_{1}) \overline{P}_{d} -  P(\mathcal{H}_{0})P_{f}\right \}. \label{Ediffwrtmu}
\end{align}
Substituting \eqref{Rdiffwrtmu} and \eqref{Ediffwrtmu} in \eqref{firstdriEEwrtomu}, we get
\begin{align}
& \frac{\partial EE_{h}(\tau,\alpha,\mu,\varepsilon^*)}{\partial \mu} = \frac{\hspace{-0.1cm} X_1\left \{ \hspace{-0.1cm} \frac{-A}{\mu(B+\frac{A}{\mu}) \ln (2)} \hspace{-0.1cm} + \hspace{-0.1cm} \log_2\left( \hspace{-0.05cm} B \hspace{-0.1cm} + \hspace{-0.1cm} \frac{A}{\mu} \hspace{-0.05cm} \right) \hspace{-0.1cm} \right \}}{\left ( X_3 + X_4 \mu\right )^2} \nonumber \\
& ~~~~~~~~~~~~ + \frac{X_2\left \{ \hspace{-0.1cm} \frac{-C}{\mu(D+\frac{C}{\mu}) \ln (2)} \hspace{-0.1cm} + \hspace{-0.1cm} \log_2\left( \hspace{-0.05cm} D \hspace{-0.1cm} + \hspace{-0.1cm} \frac{C}{\mu} \hspace{-0.05cm} \right) \hspace{-0.1cm} \right \}}{\left ( X_3 + X_4 \mu\right )^2}.
\end{align}
Based on this, we observe that  
\begin{align}
& \underset{\mu\rightarrow +\infty}{\lim}\frac{\partial EE_{h}(\tau,\alpha,\mu,\varepsilon^*)}{\partial \mu} \nonumber \\
& = \underset{\mu\rightarrow +\infty}{\lim}\frac{\frac{(X_1+X_2)X_3}{\mu}+X_4{(X_1+X_2)+X_1+X_2}}{\mu\left (\frac{ P_s(1-\tau)}{\mu} + X_4 \mu\right )^2}=0.
\end{align}
Moreover, the second derivative of $EE_{h}(\tau,\alpha,\mu,\varepsilon^*)$ with respect to $\mu$ can be expressed as follows: 
\begin{align}\label{secdirwrtmu}
 &\frac{\partial^2 EE_{h}(\tau,\alpha,\mu,\varepsilon^*)}{\partial \mu^2} \nonumber \\
 & ~~~~~~~ = -\frac{x_1 A}{E(A+B\mu)\ln2}+\frac{x_1\tau W \log_2\left (   \frac{A}{\mu}+B \right )}{E}\nonumber \\
 & ~~~~~~~~~ -\frac{x_2 C}{E(C+D\mu)\ln2}+\frac{x_2\tau W \log_2\left ( \frac{C}{\mu}+D \right )}{E}\nonumber \\
 & ~~~~~~~~~ -\frac{x_4\left \{ x_1 \mu \log(\frac{A}{\mu}+B)+x_2 \mu \log(\frac{C}{\mu}+D) \right \}}{\left [ x_3+\mu x_4 \right ]^2}.
\end{align}
Carrying out some long but straightforward   algebraic manipulations, it can be shown that $\frac{\partial^2 EE_{h}(\tau,\alpha,\mu,\varepsilon^*)}{\partial \mu^2} \leq 0$. Hence, $EE_{h}(\tau,\alpha,\mu,\varepsilon^*)$ is an increasing function of $\mu$, which implies that $\mu^*=1$.

\section{Proof of Theorem \ref{alpha}} \label{alphaProof}
Let us define
\begin{align}
& y_1 \triangleq 1- \frac{P_c \mu \tau }{  \left[Z_{I}P_{T,PU}+P_{0}\right]\tau \mu}, \label{eqny1} \\
& y_2 \triangleq  \frac{\tau P_{R}-P_s(1-\tau) }{  \left[Z_{I}P_{T,PU}+P_{0}\right]\tau \mu}, \label{eqny2} \\
& y_3 \triangleq 1-\frac{P_c }{P_0}-\frac{P_s(1-\tau) }{P_{0}\tau \mu}, \label{eqny3}
\end{align}
and
\begin{align}
& y_4 \triangleq \frac{\tau P_{R} }{P_{0}\tau \mu}, \label{eqny4}
\end{align}
such that $y_1$, $y_2$, $y_3$, and $y_4$ are positive constants. Then, the expression for energy efficiency is expressed as 
 \begin{align}
& \teeabc(\tau,\alpha,\mu^*,\varepsilon^*)= \hspace{-0.05cm} EE_{b}(\tau, \alpha,\mu^*,\varepsilon^*) \hspace{-0.1cm} + \hspace{-0.1cm} EE_{h}(\tau, \alpha,\mu^*,\varepsilon^*),\nonumber \\
& ~~ =\frac{R_b(\tau, \alpha,\mu^*,\varepsilon^*)}{E(\tau, \alpha,\mu^*,\varepsilon^*)}+\frac{R_h(\tau, \alpha,\mu^*,\varepsilon^*)}{E(\tau, \alpha,\mu^*,\varepsilon^*)} \nonumber \\
& ~~ = \frac{P(\mathcal{H}_{1})P_d(1-\alpha)\tau B_b}{E(\tau, \alpha,\mu^*,\varepsilon^*)} \nonumber \\
& ~~~~~ + \frac{P(\mathcal{H}_{1})(1-P_d)\textcolor{black}\kappa \mu \tau W \log_2\left [ y_1+\alpha y_2 \right ]}{E(\tau, \alpha,\mu^*,\varepsilon^*)} \nonumber \\
& ~~~~~ + \frac{P(\mathcal{H}_{0})(1-P_f) \mu \tau W \log_2\left [  y_3+\alpha y_4 \right ]}{E(\tau, \alpha,\mu^*,\varepsilon^*)}.
\end{align}
Now, consider the first derivative of $\eeabc(\tau, \alpha,  \mu^*, \varepsilon^*)$ with respect to $\alpha$, that is, $ \partial \eeabc(\tau,\alpha,\mu^*,\varepsilon^*)/\partial \alpha$, which is given by
\begin{align}
& \frac{\partial \eeabc(\tau,\alpha,\mu^*,\varepsilon^*)}{\partial \alpha}\hspace{-0.1cm}=\frac{P(\mathcal{H}_{1})(1-P_{d}) \textcolor{black}\kappa\frac{\mu \tau W} {\ln2}\frac{y_2} {y_1+\alpha y_2}}{  \eabc(\tau, \alpha,\mu^*, \varepsilon^*)} \nonumber \\
& ~~~ -\frac{P(\mathcal{H}_{1})P_d \tau B_b}{\eabc(\tau, \alpha,\mu^*, \varepsilon^*)} +\frac{P(\mathcal{H}_{0})(1-P_{f}) \frac{\mu \tau W} {\ln2}\frac{y_3} {y_1+\alpha y_4}}{  \eabc(\tau, \alpha,\mu^*, \varepsilon^*)}.\label{firstdirEE1}
\end{align}
Likewise, the second derivative of $\teeabc(\tau, \alpha,\mu^*, \varepsilon^*)$ is given by
\begin{align}
&\frac{\partial^2 \eeabc(\tau,\alpha,\mu^*,\varepsilon^*)}{\partial \alpha^2}\hspace{-0.1cm} =\hspace{-0.05cm}\frac{\partial^2 EE_{h}(\tau,\alpha,\mu^*,\varepsilon^*)}{\partial \alpha^2}, \nonumber \\
&~~~~~~ =-\frac{P(\mathcal{H}_{1}(1-P_d)\textcolor{black}\kappa}{E} \frac{\mu \tau W}{\ln2}\frac{y_{2}^2}{(y_1+\alpha y_2)^2} \nonumber \\
&~~~~~~~~~ -\frac{P(\mathcal{H}_{0})(1-P_f)}{E} \frac{\mu \tau W}{\ln2}\frac{y_{4}^2}{(y_1+\alpha y_2)^2} <0 \label{secdrivEEbandEEh2}
\end{align}
From  \eqref{firstdirEE1} and \eqref{secdrivEEbandEEh2}, we can infer that $ \partial \eeabc(\tau,\alpha,\mu^*,\varepsilon^*)/\partial \alpha$ is a decreasing function of $\alpha$. Furthermore, to guarantee that there exist a value of $\alpha \in [\alpha^{\dagger},1]$ such that $EE^{'}_{ABC}(\tau, \alpha,  \mu^*, \varepsilon^*)=0$, we calculate the following boundary  values. To this effect, observing that when $\alpha = \alpha^\dagger$,  it follows that
\begin{align}
& \dfrac{\partial \eeabc(\tau, \alpha^\dagger,  \mu^*, \varepsilon^*)}{\partial \alpha} =\frac{P(\mathcal{H}_{1})(1-P_{d})\textcolor{black}\kappa \frac{\mu \tau W} {\ln2}\frac{y_2} {y_1+\alpha^\dagger y_2}}{  \eabc(\tau, \alpha,\mu^*, \varepsilon^*)}\nonumber \\
& ~ -\frac{P(\mathcal{H}_{1})P_d \tau B_b}{\eabc(\tau, \alpha,\mu^*, \varepsilon^*)} +\frac{P(\mathcal{H}_{0})(1-P_{f}) \frac{\mu \tau W} {\ln2}\frac{y_3} {y_1+\alpha^\dagger y_4}}{  \eabc(\tau, \alpha,\mu^*, \varepsilon^*)}~ \geq 0, \label{ForBbubEqn}
\end{align}
whereas  when $\alpha = 1$, we get 
\begin{align}\label{ForBblbeqn}
& \frac{\partial \eeabc(\tau, 1,  \mu^*, \varepsilon^*)}{\partial \alpha}= \frac{P(\mathcal{H}_{1})(1-P_{d})\textcolor{black}\kappa \frac{\mu \tau W} {\ln2}\frac{y_2} {y_1+ y_2}}{  \eabc(\tau, \alpha,\mu^*, \varepsilon^*)}\nonumber \\
& ~ -\frac{P(\mathcal{H}_{1})P_d \tau B_b}{\eabc(\tau, \alpha,\mu^*, \varepsilon^*)} +\frac{P(\mathcal{H}_{0})(1-P_{f}) \frac{\mu \tau W} {\ln2}\frac{y_3} {y_1+ y_4}}{  \eabc(\tau, \alpha,\mu^*, \varepsilon^*)}~ \leq 0.
\end{align} 
Therefore, there exists an $\alpha^* \in [\alpha^\dagger, 1]$ where the derivative $\dfrac{\partial \eeabc(\tau, \alpha^\dagger,  \mu^*, \varepsilon^*)}{\partial \alpha}$ is exactly $0$. Thus, the bounds on $B_b$ are obtained by equating the expression in \eqref{firstdirEE1} to zero, and rearranging accordingly, yielding Eq.\eqref{Bbeqn}.
\begin{figure*}[ht]
\begin{align}
& B_b=\frac{1-P_d}{P_d}\frac{\textcolor{black}\kappa \mu \tau W}{\ln2}\frac{\tau P_R-P_s(1-\tau)}{\left [ Z_{I}P_{T,PU}+P_{0}-P_c\right ]\tau \mu+\alpha(\tau P_R-P_s(1-\tau))} \nonumber \\
& ~~~~~~~~~~~~  + \frac{P(\mathcal{H}_{0})}{P(\mathcal{H}_{1})}\frac{(1-P_f)}{P_d}\frac{\mu \tau W}{\ln2}\frac{\tau P_R}{(P_0 -P_c )\mu \tau +P_s(1-\tau)+\alpha \tau P_R}. \label{Bbeqn}
\end{align}
\hrulefill
\begin{align}
& \frac{\partial EE_{b}(\tau,\alpha^*,\mu^*,\varepsilon^*)}{\partial \tau}\hspace{-0.05cm}=\hspace{-0.1cm}\frac{P(\mathcal{H}_{1})\overline{P_d}B_{b}\left \{ 1-\alpha-\tau \frac{\partial \alpha^*}{\partial \tau}
\right \}}{\left[\eabc(\tau, \alpha^*,\mu^*, \varepsilon^*) \right]} \nonumber \\
& ~~~ + \frac{\left \{ P(\mathcal{H}_{1})\overline{P_d}(1-\alpha)\tau B_{b} \right \}\left \{ -P_s+\mu P_{tr}\left [ P(\mathcal{H}_{1})(1-P_d)\textcolor{black}\kappa-P(\mathcal{H}_{0})\tau \frac{\partial P_f}{\partial \tau}+P(\mathcal{H}_{0})(1-P_f) \right ] \right \}  }{\left[\eabc(\tau, \alpha^*,\mu^*, \varepsilon^*) \right]^2}. \tag{82} \label{EEb} 
\end{align}
\hrulefill
\begin{align}
&\frac{\partial  EE_{h}(\tau,\alpha^*,\mu^*,\varepsilon^*)}{\partial \tau} = \left \{ \frac{1}{\eabc(\tau, \alpha^*,\mu^*, \varepsilon^*)} \right \}P(\mathcal{H}_{1})(1-\overline{P_d})\textcolor{black}\kappa\mu W \left \{ \frac{w_2^{2}}{\tau w_1-w_2} +\log(w_1-\frac{w_2}{\tau})\right \}\nonumber \\ &-\left \{ \frac{1}{\eabc(\tau, \alpha^*,\mu^*, \varepsilon^*)} \right \}\frac{(1-P_f)w_4w_3}{\ln2(\tau w_3-w_4)^2}- \left \{ \frac{1}{\eabc(\tau, \alpha^*,\mu^*, \varepsilon^*} \right \}\frac{\frac{\partial P_f}{\partial \tau} w_4}{\ln2 (\tau w_3-w_4)}\nonumber \\ &-\left \{ \frac{1}{\eabc(\tau, \alpha^*,\mu^*, \varepsilon^*)} \right \}\frac{\partial P_f}{\partial \tau}\log(w_3-\frac{w_4}{\tau})  +\left \{ \frac{1}{\eabc(\tau, \alpha^*,\mu^*, \varepsilon^*)} \right \}\frac{(1-P_f)w_4}{ \tau (\tau w_3-w_4)\tau \ln2} \nonumber \\&-\frac{1}{\eabc(\tau, \alpha^*,\mu^*, \varepsilon^*)^2}\left \{ -P_s+\mu P_{tr}\left [ P(\mathcal{H}_{1})(1-P_d)\textcolor{black}\kappa-P(\mathcal{H}_{0})\tau \frac{\partial P_f}{\partial \tau}+P(\mathcal{H}_{0})(1-P_f) \right ] \right \}\nonumber \\& \left \{  P(\mathcal{H}_{1})(1-P_d)\textcolor{black}\kappa\mu \tau W \log_2\left [w_1-\frac{w_2}{\tau}\right ]  +P(\mathcal{H}_{0})(1-P_f) \mu \tau W \log_2\left [ w_3-\frac{w_4}{\tau} \right ]\right \}. \tag{83}
\label{EEh}
\end{align}
\hrulefill
\end{figure*}
If the effective interference from the PU is neglected, then
\begin{align}
& B_b=\frac{P(\mathcal{H}_{0})}{P(\mathcal{H}_{1})}\frac{(1-P_f)}{P_d}\frac{\mu \tau W}{\ln2} \nonumber \\
& ~~~~~ \times \frac{\tau P_R}{(P_0 -P_c )\mu \tau +P_s(1-\tau)+\alpha \tau P_R}.
\end{align}
Now, the upper and lower bounds on $B_b$, namely, $\Bblb$ and $\Bbub$ can be obtained by substituting for the value of $\alpha$ corresponding to the two extreme cases $1$ and $\alpha^\dagger$, respectively. These bounds are given by
\begin{align}
& \Bblb \triangleq \frac{P(\mathcal{H}_{0})}{P(\mathcal{H}_{1})}\frac{(1-P_f)}{P_d}\frac{\mu \tau W}{\ln2}  \\
& ~~~~~ \times \frac{\tau P_R}{(P_0 -P_c )\mu \tau +P_s(1-\tau)+ \tau P_R}, \nonumber \\
& \Bbub \triangleq \frac{P(\mathcal{H}_{0})}{P(\mathcal{H}_{1})}\frac{(1-P_f)}{P_d}\frac{\mu \tau W}{\ln2} \nonumber \\
& ~~~~~ \times \frac{\tau P_R}{(P_0 -P_c )\mu \tau +P_s(1-\tau)+\alpha^{\dagger} \tau P_R}.
\end{align}
Therefore, when $B_b \in (\Bblb, \Bbub)$, $\teeabc(\tau, \alpha,  \mu^*, \varepsilon^*)$ is concave in $\alpha$. Finally, the optimal $\alpha^* \in [\alpha^{\dagger},1]$, can be obtained by neglecting the interference term and equating the first derivative to zero, which is expressed  as
\begin{align}
& \alpha^*=\frac{P(\mathcal{H}_{0}) }{P(\mathcal{H}_{1})}\frac{(1-P_f)}{P_d}\frac{\mu \tau W}{\ln 2}-\frac{y_3}{y_4},
\end{align}
for $B_b \in (\Bblb, \Bbub)$. Based on this, by substituting for $y_3$ and $y_4$, we obtain
\begin{align}
& \alpha^*=\frac{P(\mathcal{H}_{0}) }{P(\mathcal{H}_{1})}\frac{(1-P_f)}{P_d}\frac{\mu \tau W}{\ln 2} \nonumber \\
& ~~~~~~~~ -\frac{(P_{0} +P_{c})\tau \mu+P_s(1-\tau)}{\tau P_R}.
\end{align}

\section{Proof of Theorem \ref{fortau}} \label{fortauProof}
Let us define 
\begin{align}
& w_1 \triangleq 1+ \frac{\alpha P_R-\mu P_c+P_s }{  \left[Z_{I}P_{T,PU}+P_{0}\right]\mu}, \\
& w_2 \triangleq  \frac{P_s }{  \left[Z_{I}P_{T,PU}+P_{0}\right] \mu}, \\
& w_3 \triangleq 1+\frac{\alpha P_R-\mu P_c+P_s }{P_{0} \mu},
\end{align}
and
\begin{align}
& w_4 \triangleq \frac{P_{s} }{P_{0} \mu},
\end{align}
such that $w_1$, $w_2$, $w_3$, and $w_4$ are positive constants. Then, the expression for energy efficiency can be simplified as
\begin{equation}
 \begin{split}
 \teeabc(\tau,\alpha^*,\mu^*,\varepsilon^*) =& \frac{P(\mathcal{H}_{1})P_d(1-\alpha^*)\tau B_b}{E(\tau, \alpha,\mu^*,\varepsilon^*)}  \\
 & +\frac{P(\mathcal{H}_{1})(1-P_d)\textcolor{black}\kappa\mu \tau W \log_2\left [w_1-\frac{w_2}{\tau}\right ]}{E(\tau, \alpha,\mu^*,\varepsilon^*)} \\
 & +\frac{P(\mathcal{H}_{0})(1-P_f) \mu \tau W \log_2\left [ w_3-\frac{w_4}{\tau} \right ]}{E(\tau, \alpha,\mu^*,\varepsilon^*)}.
 \end{split}
\end{equation}
Next, we determine  the first derivative of $\eeabc(\tau, \alpha^*,  \mu^*, \varepsilon^*))$ with respect to $\tau$ i.e. $ \partial \eeabc(\tau,\alpha^*,\mu^*,\varepsilon^*)/\partial \tau $, upon which it is clear that
\begin{align}
& \frac{\partial \teeabc(\tau,\alpha^*,\mu^*,\varepsilon^*)}{\partial \tau}\hspace{-0.1cm}=\frac{\partial EE_{b}(\tau,\alpha^*,\mu^*,\varepsilon^*)}{\partial \tau}\hspace{-0.05cm} \nonumber \\
& ~~~~~~~~~~~~~~~~~~~ +\hspace{-0.05cm}\frac{\partial EE_{h}(\tau,\alpha^*,\mu^*,\varepsilon^*)}{\partial \tau}. \label{EEbandEEh1}
\end{align}
By rewriting $\alpha^*$ in \eqref{alphaopt} in terms of $\tau$ as
\begin{align}
\alpha^*=L_1\frac{(1-P_f)}{P_d}\tau-L_2-L_3\left(\frac{1}{\tau}-1\right), \label{alphaopt2}
\end{align}
where 
\begin{align}
& L_1 \triangleq \frac{P(H_0)}{P(H_1)}\frac{\mu W}{ln2} \geq 0, \\
& L_2 \triangleq \frac{(P_0+P_c)\mu}{P_R} \geq 0,
\end{align}
and 
\begin{align}
& L_3 \triangleq \frac{P_s}{P_R} \geq 0,
\end{align}
the derivative of $\alpha^*$ can be calculated as
\begin{align}
\frac{\partial \alpha^*}{\partial \tau}=-\frac{L_1P_f'}{P_d}-\frac{L_1(1-P_f)P_d'}{P_d^2}+\frac{L_3}{\tau^2}. \label{alphaoptdir}
\end{align}
Substituting \eqref{alphaopt2} and \eqref{alphaoptdir} in \eqref{EEbandEEh1}, and observing that $P_d=\overline{P_d}$ at $\varepsilon = \varepsilon^*$,we obtain $ \partial EE_{b}(\tau,\alpha^*,\mu^*,\varepsilon^*)/\partial \tau$ and $ \partial EE_{h}(\tau,\alpha^*,\mu^*,\varepsilon^*)/\partial \tau$, as given in \eqref{EEb} and \eqref{EEh}, respectively.
In addition, it is straightforward to calculate the first derivative of $P_f$ with respect to $\tau$ as
\begin{align}
& \frac{\partial P_f}{\partial \tau} = \dfrac{\left(\frac{\varepsilon^*}{\sigma^2} \hspace{-0.1cm} - \hspace{-0.1cm} 1\right)}{\sqrt{8 \pi^2 N_s(1 \hspace{-0.1cm} - \hspace{-0.1cm} \tau)}} \exp\left(-\frac{N_s(1 \hspace{-0.1cm} - \hspace{-0.1cm} \tau)}{2} \left[\frac{\varepsilon^*}{\sigma^2} \hspace{-0.1cm} - \hspace{-0.1cm} 1\right]^2\right),
\end{align}
and it is intuitive that the second derivative of $P_f$ with respect to $\tau$ would be negative, since $P_f$ is concave in $\tau$. Utilizing this result, the rest of the proof involves calculation of the second derivatives of $EE_{b}(\tau,\alpha^*,\mu^*,\varepsilon^*)$ and $EE_{h}(\tau,\alpha^*,\mu^*,\varepsilon^*)$, and showing them to be negative. Therefore, the second derivative of $\eeabc(\tau,\alpha^*,\mu^*,\varepsilon^*)$ is also negative, and hence $\eeabc(\tau,\alpha^*,\mu^*,\varepsilon^*)$ is concave in $\tau$. The corresponding details lead to expressions that are rather lengthy, which are omitted for brevity.
\end{appendices}

\bibliographystyle{IEEEtran}                              
\bibliography{IEEEabrv,abcbiblo}

\end{document}